\def\cap{∩ }
\crefname{equation}{}{}
\newcommand{\spacex}{1.5}
\newcommand{\spacey}{1.125}
\newcommand{\labelspace}{1.5}
\newcommand{\polygonshift}{0.6}
\newlength{\nodeline}
\newlength{\arrowline}
\definecolor{color_gadget_PHP}{RGB}{219, 48, 122}
\colorlet{color_gadget_PHP_inner}{color_gadget_PHP!10!white}
\colorlet{color_gadget_PHP_label}{color_gadget_PHP!70!black}
\definecolor{color_gadget_SOPL}{RGB}{255, 153, 0}
\colorlet{color_gadget_SOPL_inner}{color_gadget_SOPL!10!white}
\colorlet{color_gadget_SOPL_label}{color_gadget_SOPL!70!black}
\definecolor{color_gadget_PATHPHP}{RGB}{0, 153, 255}
\colorlet{color_gadget_PATHPHP_inner}{color_gadget_PATHPHP!10!white}
\colorlet{color_gadget_PATHPHP_label}{color_gadget_PATHPHP!70!black}
\definecolor{color_gadget_ITER}{RGB}{153, 204, 0}
\colorlet{color_gadget_ITER_inner}{color_gadget_ITER!10!white}
\colorlet{color_gadget_ITER_label}{color_gadget_ITER!70!black}
\definecolor{color_gadget_EOPL}{RGB}{0, 153, 255}
\colorlet{color_gadget_EOPL_inner}{color_gadget_EOPL!10!white}
\colorlet{color_gadget_EOPL_label}{color_gadget_EOPL!70!black}
\tikzstyle{node} = [circle, line width=\nodeline, draw = black, fill = white, inner sep = 0mm, minimum size = 3.5mm]
\tikzstyle{node_small} = [node, circle, line width = 0.25mm, minimum size = 2.5mm]
\tikzstyle{solution} = [fill=red!90!,draw=black!50!red]
\tikzstyle{side} = [fill=Goldenrod,draw=Brown]
\tikzstyle{node_text} = [] 
\tikzstyle{node_regular} = [node]
\tikzstyle{node_regular_small} = [node_small]
\tikzstyle{node_solution} = [node, solution]
\tikzstyle{node_solution_small} = [node_small,solution]
\tikzstyle{node_a} = [node, side, rectangle, minimum size = 4.3mm]
\tikzstyle{node_a_solution} = [node_a, solution]
\tikzstyle{node_a_small} = [node_small, side, rectangle, minimum size = 2.15mm]
\tikzstyle{node_a_solution_small} = [node_a_small,solution]
\tikzstyle{node_b} = [node, side, diamond, minimum size = 6.2mm]
\tikzstyle{node_b_small} = [node_a_small, diamond, minimum size = 3.1mm]
\tikzstyle{node_notice} = [node, draw = Green!20!LimeGreen, line width=2.5\nodeline, fill=none, minimum size = 9mm, dotted]
\tikzstyle{node_notice_small} = [node_notice, line width=2*\nodeline, minimum size = 8mm]
\tikzstyle{naive} = [minimum size = 4.5mm,rectangle]
\tikzstyle{node_regular_intro} = [node,fill=Gray!10!white]
\tikzstyle{edge} = [-{Latex[round]}, line width=\arrowline]
\tikzstyle{edge_regular} = [edge]
\tikzstyle{edge_regular_small} = [-{Latex[round]}, line width = 0.25mm, shorten < = 3pt, shorten >=3pt]
\tikzstyle{edge_php} = [edge, color=color_gadget_PHP!70!black]
\tikzstyle{edge_php_small} = [edge_php, edge_regular_small]
\tikzstyle{edge_eopl} = [edge, color=color_gadget_EOPL!70!black]
\tikzstyle{edge_eopl_small} = [edge_eopl, edge_regular_small]
\tikzstyle{edge_iter} = [edge, color=color_gadget_ITER!70!black]
\tikzstyle{edge_iter_small} = [edge_iter, edge_regular_small]
\tikzstyle{edge_pathphp_small} = [line width = 0.25mm, -{Latex[round]}, decorate, decoration={snake, segment length=2.5mm, amplitude=1mm, pre length=7pt,post length=8pt}, shorten < = 3pt, shorten >=3pt, color=color_gadget_PATHPHP_label]
\tikzstyle{gadget} = [rounded corners, line width = 0.4mm, dashed]
\tikzstyle{gadget_PHP} = [gadget, draw = color_gadget_PHP, fill=color_gadget_PHP_inner]
\tikzstyle{gadget_PHP_small} = [gadget_PHP, line width = 0.2mm]
\tikzstyle{gadget_SOPL} = [gadget, draw = color_gadget_SOPL, fill=color_gadget_SOPL_inner]
\tikzstyle{gadget_SOPL_small} = [gadget_SOPL, line width = 0.2mm]
\tikzstyle{gadget_PATHPHP} = [gadget, draw = color_gadget_PATHPHP, fill=color_gadget_PATHPHP_inner]
\tikzstyle{gadget_PATHPHP_small} = [gadget_PATHPHP, line width = 0.2mm]
\tikzstyle{gadget_ITER} = [gadget, draw = color_gadget_ITER, fill=color_gadget_ITER_inner]
\tikzstyle{gadget_ITER_small} = [gadget_ITER, line width = 0.2mm]
\tikzstyle{gadget_EOPL} = [gadget, draw = color_gadget_EOPL, fill=color_gadget_EOPL_inner]
\tikzstyle{gadget_EOPL_small} = [gadget_EOPL, line width = 0.2mm] 
\theoremstyle{definition}
\newtheorem{definition}{Definition}
\theoremstyle{plain}
\newtheorem{theorem}{Theorem}
\newtheorem{lemma}{Lemma}
\theoremstyle{remark}
\newtheorem{remark}{Remark}
\Crefname{claim}{Claim}{Claims}
\newcommand{\newclass}[2]{\newcommand{#1}{{\text{\upshape\sffamily #2}}\xspace}}
\newclass{\NP}{NP}
\newclass{\FP}{FP}
\newclass{\TFNP}{TFNP}
\newclass{\PLS}{PLS}
\newclass{\PPA}{PPA}
\newclass{\PPAD}{PPAD}
\newclass{\PPADS}{PPADS}
\newclass{\PPP}{PPP}
\newclass{\PWPP}{PWPP}
\newclass{\CLS}{CLS}
\newclass{\EOPL}{EOPL}
\newclass{\SOPL}{SOPL}
\newclass{\UEOPL}{UEOPL}
\newclass{\cA}{A}
\newclass{\cB}{B}
\newcommand{\newprob}[2]{\newcommand{#1}{{\text{\upshape\scshape #2}}\xspace}}
\newprob{\eol}{EoL}
\newprob{\eolLong}{End-of-Line}
\newprob{\sol}{SoL}
\newprob{\solLong}{Sink-of-Line}
\newprob{\iter}{Iter}
\newprob{\sod}{SoD}
\newprob{\sodLong}{Sink-of-DAG}
\newprob{\kkt}{KKT}
\newprob{\pA}{A}
\newprob{\pB}{B}
\newprob{\grid}{Grid}
\newprob{\eopl}{EoPL}
\newprob{\eoplLong}{End-of-Potential-Line}
\newprob{\ueoplLong}{Unique-EoPL}
\newprob{\eoml}{EoML}
\newprob{\eomlLong}{End-of-Metered-Line}
\newprob{\sopl}{SoPL}
\newprob{\soplLong}{Sink-of-Potential-Line}
\newprob{\pigeoncircuit}{Pigeon-Circuit}
\newprob{\php}{PHP}
\newprob{\bphp}{bij-PHP}
\newprob{\iphp}{inj-PHP}
\newprob{\iphpLong}{Injective-Pigeonhole}
\newprob{\pathbphp}{Path-bij-PHP}
\newprob{\pathiphp}{Path-inj-PHP}
\newcommand{\poly}{\textup{poly}}
\newcommand{\nul}{\textup{\textsf{null}}}
\begin{document}

\mbox{}\vspace{8mm}

\begin{center}
{\huge Further Collapses in $\TFNP$}
\\[1.3cm] \large
	
\setlength\tabcolsep{1.2em}
\begin{tabular}{cccc}
Mika G\"o\"os&
Alexandros Hollender&
Siddhartha Jain&
Gilbert Maystre\\[-1mm]
\small\slshape EPFL &
\small\slshape University of Oxford &
\small\slshape EPFL &
\small\slshape EPFL
\end{tabular}

\vspace{1mm}
\begin{tabular}{ccc}
William Pires&
Robert Robere&
Ran Tao\\[-1mm]
\small\slshape McGill University &
\small\slshape McGill University &
\small\slshape McGill University
\end{tabular}

\vspace{6mm}
	
\large

	
\vspace{4mm}
\end{center}

\begin{quote}
\noindent\small
{\bf Abstract.}~
We show $\EOPL=\PLS\cap\PPAD$. Here the class $\EOPL$ consists of all total search problems that reduce to the {\scshape End-of-Potential-Line} problem, which was introduced in the works by Hub{\'a}{\v{c}}ek and Yogev ({\footnotesize SICOMP 2020}) and Fearnley et~al.~({\footnotesize JCSS 2020}). In particular, our result yields a new simpler proof of the breakthrough collapse $\CLS=\PLS\cap\PPAD$ by Fearnley et~al.~({\footnotesize STOC 2021}). We also prove a companion result $\SOPL=\PLS\cap\PPADS$, where $\SOPL$ is the class associated with the {\scshape Sink-of-Potential-Line} problem.
\end{quote}

\section{Introduction}

Our main results are two collapses of total $\NP$ search problem ($\TFNP$) classes.
\begin{theorem}\label{thm:EOPL-collapse}
$\EOPL = \PLS\cap\PPAD$.
\end{theorem}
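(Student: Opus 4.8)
The inclusion $\EOPL\subseteq\PLS\cap\PPAD$ is the routine direction and I would take it first. An \eoplLong instance is a line on $\{0,1\}^n$ — given by successor/predecessor circuits $S,P$ with source $0^n$ — together with a potential $\Phi$ meant to increase strictly along the line. Dropping $\Phi$ leaves an \eolLong instance whose endpoints form a subset of the \eoplLong solutions, giving $\EOPL\subseteq\PPAD$. For $\EOPL\subseteq\PLS$ we instead keep $\Phi$: a line carrying a polynomially bounded, strictly increasing potential is a DAG with a potential, hence an instance of \sodLong (which is $\PLS$-complete), and a sink of this DAG is either the proper end of the line or a spot where $\Phi$ fails to increase along the successor edge — both are \eoplLong solutions. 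Hence $\EOPL\subseteq\PLS\cap\PPAD$.

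The substance is the reverse inclusion $\PLS\cap\PPAD\subseteq\EOPL$. Fix $R\in\PLS\cap\PPAD$; on an input $x$ we may, in polynomial time, produce (i) an \iter instance $C$ witnessing $R\le\PLS$ — normalised so that $C$ is pointwise non-decreasing with $C(0^n)\neq 0^n$ — and (ii) an \eolLong instance $(S,P,0^n)$ witnessing $R\le\PPAD$, each coming with a decoder that turns any of its solutions into a solution of $R(x)$; padding lets us assume the two instances share a dimension. The plan is to weave them into a single \eoplLong instance whose line advances along the \eolLong line of $(S,P,0^n)$ while simultaneously stepping along the chain $0^n, C(0^n), C^2(0^n),\dots$, with the potential read off from the progress made in the $C$-chain, so that it increases by construction whenever both searches advance. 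One then checks that each way the resulting \eoplLong instance can exhibit a solution is decodable: a proper sink of the woven line arises exactly when the \eolLong line ends (an \eolLong solution) or the $C$-chain stalls (an \iter solution); a step along which the potential does not increase is again a stalling point of $C$; and a source other than $0^n$ must, by the design of the predecessor circuit, expose a broken edge of $(S,P,0^n)$. Composing with the decoder yields a solution of $R$, so $R\in\EOPL$. The companion result $\SOPL=\PLS\cap\PPADS$ is obtained by the same weaving with the $\PPADS$-complete problem \solLong in place of \eolLong; since \sopl-type instances have no source solutions, the third case above does not arise, so $\SOPL$ is the cleaner statement to settle first and then transfer the construction to $\EOPL$.

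The main obstacle is the bookkeeping that makes the woven instance \emph{syntactically} valid while keeping \emph{every} solution decodable — not merely the solutions lying on the part of the graph reachable from $0^n$. The delicate point is the predecessor circuit. Unlike the \eolLong line, the $C$-chain is not efficiently reversible: one cannot recover the previous chain element from the current one, and recording a bounded window of the chain's history does not help either. (Indeed a $\PLS$-complete problem cannot be made path-like without collapsing $\PLS$ into $\PPADS$.) The construction must therefore be arranged so that any vertex at which the claimed predecessor fails to map back under the successor circuit is itself a decodable \eolLong- or \iter-solution; intuitively the \eolLong side has to supply the reversibility that the \iter side lacks, while the \iter side supplies a potential that cannot stall silently — running out of potential must produce an \iter solution, unlike a plain step counter. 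Pinning down this interface is the crux, and it is exactly the extra difficulty of $\EOPL$ over $\SOPL$. Once $\PLS\cap\PPAD\subseteq\EOPL$ is in place it combines with the known inclusions $\EOPL\subseteq\CLS\subseteq\PLS\cap\PPAD$ to collapse the whole chain, re-proving $\CLS=\PLS\cap\PPAD$ in particular.
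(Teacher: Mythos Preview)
Your easy direction is fine. The hard direction, however, is not a proof but a wish list: you describe what a ``woven'' instance would have to satisfy and then name the obstacle---the $\iter$ chain $0^n,C(0^n),C^2(0^n),\dots$ is not efficiently reversible---without overcoming it. Concretely, in your sketch a vertex records a pair $(v,c)$ with successor $(S(v),C(c))$ and potential $c$; to be a valid $\eopl$ instance you must also supply a predecessor, and that requires $C^{-1}(c)$. Storing a bounded window of history only pushes the problem back one step, and declaring a vertex with an inconsistent predecessor to be a source does not help, since such vertices are everywhere and carry no information about either input instance. Your closing sentence ``pinning down this interface is the crux'' is exactly right: it is the crux, and it is missing.

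The paper's route is structurally different from weaving and is worth knowing. It does not try to run the two searches in parallel along a single line. Instead it first attacks the $\PPADS$ version: given $\sod$ on an $[N]\times[M]$ grid together with an $\iphp$ instance on $N{+}1$ pigeons, it stretches the pigeonhole into a \emph{path} pigeonhole that funnels $N^2$ paths down to $N$ over many columns (\cref{lem:inj-path-php}), and then replaces every $\sod$ node by a copy of this gadget. The gadget absorbs the merging---many predecessors mapping to one successor---so the resulting graph has efficiently computable predecessors, and every sink decodes to a solution of $\sod$ or of $\iphp$. This gives $\sod\curlywedge\iphp\le\sopl$ and hence $\SOPL=\PLS\cap\PPADS$. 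The $\EOPL$ step is then separate: one shows $\sopl\curlywedge\bphp\le\eopl$ by taking many copies of the $\sopl$ instance together with reversed copies, gluing each non-distinguished source to its mirror sink, and using the $\bphp$ instance to wire the (many) distinguished-source copies to a single fresh source. In neither lemma does the $\PLS$ side supply the potential while the $\PPAD$ side supplies reversibility along a common line; rather, the pigeonhole instance is used \emph{locally}, as a gadget that hides merging (for $\SOPL$) or hides which copy is the real start (for $\EOPL$).
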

\begin{theorem}\label{thm:SOPL-collapse}
$\SOPL = \PLS\cap\PPADS$.
\end{theorem}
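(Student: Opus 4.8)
The plan is to prove the two inclusions $\SOPL\subseteq\PLS\cap\PPADS$ and $\PLS\cap\PPADS\subseteq\SOPL$ separately, with essentially all of the work in the second.

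For $\SOPL\subseteq\PLS\cap\PPADS$ I would argue the two containments directly from the complete problem $\sopl$. For $\SOPL\subseteq\PLS$: reduce a $\sopl$ instance to a canonical $\PLS$-complete problem on the same vertex set, using the $\sopl$-potential as the $\PLS$-potential and declaring the successor of a vertex $v$ to be $S(v)$ precisely when $(v,S(v))$ is a valid $\sopl$-edge (endpoints mutually consistent and potential strictly larger) and $v$ itself otherwise; the local optima of the resulting instance then coincide with the $\sopl$-sinks, so the identity is a valid solution map. For $\SOPL\subseteq\PPADS$: reduce $\sopl$ to $\sol$ by restricting $S$ and $P$ to valid $\sopl$-edges (giving a vertex a self-loop when it has no valid outgoing, resp.\ incoming, edge), so that the resulting line is exactly the graph of valid $\sopl$-edges; its sinks coincide with the $\sopl$-solutions, and strict monotonicity of the potential along valid edges rules out cycles, so this is a legal $\sol$ instance. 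Together these give $\SOPL\subseteq\PLS\cap\PPADS$.

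The substance of the theorem is $\PLS\cap\PPADS\subseteq\SOPL$. I would take a total search problem $R$ in the intersection and unpack the two memberships: composing the $\PLS$-reduction with $\PLS$-completeness yields, for each instance $x$, a potential $\phi$ and a successor $\sigma$ with $\phi(\sigma(v))>\phi(v)$ whenever $\sigma(v)\ne v$ and with every $\sigma$-sink (a vertex with $\sigma(v)=v$) mapping back to a solution of $x$; composing the $\PPADS$-reduction with $\PPADS$-completeness yields a line, given by successor/predecessor circuits $S,P$ with a designated source $s$, every sink of which maps back to a solution of $x$. After padding, both structures live on a common vertex set, and the goal is to fuse them into one $\sopl$ instance. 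The natural attempt is for the fused line to walk a pair $(a,b)$, with $a$ a position in the $\PLS$ search and $b$ a position in the $\PPADS$ line: advance $b$ along $S$, intersperse advances of $a$ along $\sigma$, carry an auxiliary bounded counter so that the potential strictly increases between consecutive $\sigma$-steps, and take the $\sopl$-potential to be $\phi(a)$ together with the counter in lexicographic order. This potential is strictly increasing along every intended edge and is bounded, so the walk from $(s,\text{source of }\sigma)$ cannot cycle; the only way it can stall is when $b$ reaches a sink of the $\PPADS$ line or $a$ reaches a $\sigma$-sink, and in either case the stalling vertex encodes a solution of $R$. The identical construction with an $\eol$ instance in place of the $\sol$ instance should yield the companion collapse $\EOPL=\PLS\cap\PPAD$ in the same way.

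I expect the genuine obstacle to be turning the fused walk into a \emph{legal} $\sopl$ instance, i.e.\ one in which every vertex has a poly-time computable \emph{unique} predecessor. The $\PPADS$ side is harmless, since a line is reversible through $P$; but the $\PLS$ side is not, because the move graph of a $\PLS$ instance is only a forest of in-trees, and a vertex there may have exponentially many in-neighbours that cannot be enumerated efficiently, so after a $\sigma$-step the walk cannot in general recover which predecessor it came from --- if it always could, we would obtain $\PLS\subseteq\PPADS$. The technical core of the proof is therefore a device that lets the $\PLS$ local search ``ride alongside'' the $\PPADS$ line reversibly: one wants the interleaving and the bookkeeping arranged so that the line's own backward pointers, together with the explicit potential and counter values, suffice to reconstruct the full previous state, while no spurious sinks or potential violations are introduced off the intended walk. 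Once that device is in place the rest is routine checking: that the constructed $S$, $P$, $\phi$ define a line with potential strictly increasing along every valid edge, that the designated source has a valid outgoing edge, and that every $\sopl$-solution of the constructed instance --- sink or potential violation, reachable from the source or not --- maps to a solution of $R$.
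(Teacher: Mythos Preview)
Your easy direction is fine. For the hard direction you have correctly located the obstacle --- the $\PLS$ successor map $\sigma$ is many-to-one, so after a $\sigma$-step there is no efficient way to recover the predecessor --- but you have not actually overcome it. Your proposal explicitly defers this to an unspecified ``device'' and asserts that ``once that device is in place the rest is routine checking.'' But the device \emph{is} the theorem. The bookkeeping you suggest (potential value, bounded counter, and the $\PPADS$ line's backward pointer on the $b$-coordinate) does not determine which $\sigma$-preimage you came from: many $a'$ with $\sigma(a')=a$ can share the same potential level, and nothing in the $b$-coordinate or the counter distinguishes them. So as written the interleaved walk is not a legal $\sopl$ instance, and you have not indicated any mechanism that would make it one.

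The paper's approach is quite different in spirit and supplies exactly the missing idea: it does \emph{not} run the two searches side by side. Instead it uses the $\PPADS$ instance structurally, as a gadget that hides merging. First it stretches the $(T{+}1)\to T$ injective pigeonhole into a long ``path-pigeonhole'' that funnels $N^2$ reversible paths down to $N$ (\cref{lem:inj-path-php}); the point is that any path that dies inside this gadget pinpoints an $\iphp$ solution. Then each node $(x,y)$ of the $\sod$ grid is replaced by a copy of this gadget, and the $N$ outgoing strands of the $(x,y)$-gadget are routed to $N$ designated input slots of the $(S(x,y),y{+}1)$-gadget, namely those whose index records the predecessor $x$. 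This makes predecessors locally computable --- the copy index tells you where you came from --- while the unavoidable $N^2\to N$ contraction is pushed \emph{inside} the gadget, where any resulting sink is an $\iphp$ solution rather than a defect of the construction (\cref{lem:iphp_or_iter_reduces_to_sopl}). Thus every $\sopl$ sink yields either an $\sod$ sink or an $\iphp$ solution. Your interleaving picture never uses the pigeonhole as a \emph{merging} primitive, which is the crux.

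One more point: your final remark that ``the identical construction'' gives $\EOPL=\PLS\cap\PPAD$ is not right either. The paper first uses \cref{thm:SOPL-collapse} to reduce the task to $\SOPL\cap\PPAD\subseteq\EOPL$, and then proves $\sopl\curlywedge\bphp\le\eopl$ by a separate trick (taking a reversed copy of the $\sopl$ instance, gluing sources to their mirror sinks, and using the $\bphp$ instance to wire the many resulting copies of the distinguished source to a single new source). That argument genuinely exploits the bijective/source-solution feature of $\PPAD$ and is not a rerun of the $\SOPL$ reduction.
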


In particular, \cref{thm:EOPL-collapse} answers a question asked by Daskalakis in his Nevanlinna Prize lecture~\cite[Open Question 10]{Daskalakis2019}. Let us explain what these collapses mean and how they fit into the diverse complexity zoo of search problem classes, as summarised in \cref{fig:classes}. The classes $\PLS$, $\PPAD$, $\PPADS$ are classical. They were all introduced in the original pioneering works~\cite{Megiddo1991, Johnson1988,Papadimitriou1994} that founded the theory of $\TFNP$. To define these classes, it is most convenient to describe a canonical complete problem for each class. (See \cref{sec:grid} for more formal definitions).
\begin{description}
\item[\bf $\PLS$:] $\sodLong$ ($\sod$). We are given \emph{implicit access} to a directed graph $G=(V,E)$ that is acyclic, has out-degree at most $1$, and has exponentially many nodes, $|V|=2^n$. The graph is described by a $\poly(n)$-sized circuit: for any node $v\in V$, we can compute its unique \emph{successor} (out-neighbour) $u$, if any, and also an integer \emph{potential}, which is guaranteed to increase along the direction of the edge~$(v,u)$. The goal is to find a \emph{sink} node (in-degree~$\geq 1$, out-degree~0).

\item[\bf $\PPAD$:] $\eolLong$ ($\eol$). We are given access to a directed graph $G=(V,E)$ that has in/out-degree at most $1$, and has $|V|=2^n$ nodes. The graph is described by a $\poly(n)$-sized circuit: for any $v\in V$, we can compute its \emph{successor} $u$ and \emph{predecessor}~$u'$, if any. We are guaranteed that if $v$'s successor is $u$, then $u$'s predecessor is $v$, and vice versa. In addition, we are given the name of a \emph{distinguished source} node $v^*$ (in-degree~0, out-degree~1). The goal is to find any source or sink other than $v^*$.

\item[\bf $\PPADS$:] $\solLong$ ($\sol$). Same as $\eol$ except the goal is to find a sink.
\end{description}

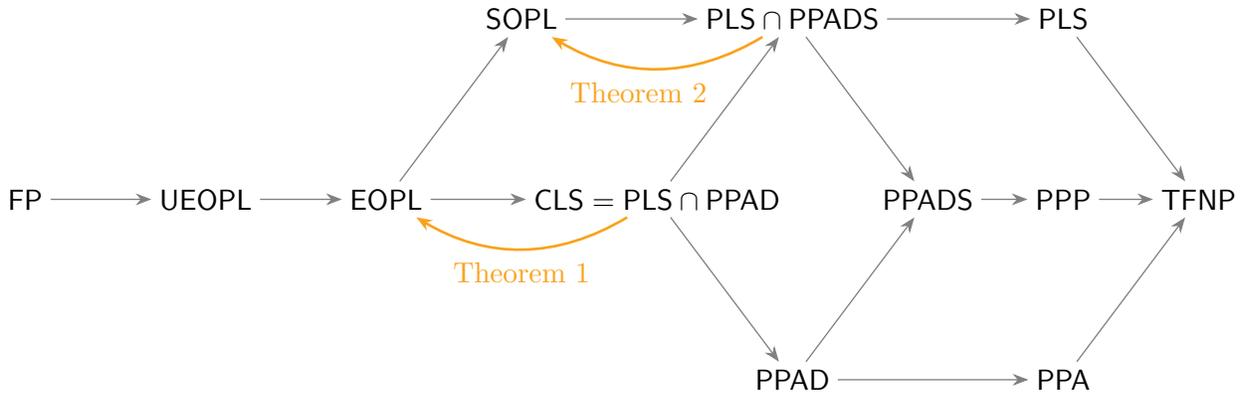
\begin{figure}[t]
\centering
\begin{tikzpicture}[scale=0.8]
\tikzset{inner sep=0,outer sep=3}


\begin{scope}[xscale=1.5]
\node (FP) at (-5.5,0) {$\FP$};
\node (UEOPL) at (-3.5,0) {$\UEOPL$};
\node (EOPL) at (-1.5,0) {$\EOPL$};
\node (SOPL) at (0,3) {$\SOPL$};
\node (PLS-PPAD) at (1.5,0) {$\CLS=\PLS\cap\PPAD$};
\node (PLS-PPADS) at (3,3) {$\PLS\cap\PPADS$};
\node (PPAD) at (3,-3) {$\PPAD$};
\node (PPADS) at (4.5,0) {$\PPADS$};
\node (PLS) at (6,3) {$\PLS$};
\node (PPP) at (6,0) {$\PPP$};
\node (PPA) at (6,-3) {$\PPA$};
\node (TFNP) at (7.5,0) {$\TFNP$};
\end{scope}

\path[-{Stealth[length=6pt]},line width=.5pt,gray]
(FP) edge (UEOPL)
(UEOPL) edge (EOPL)
(EOPL) edge (SOPL)
(EOPL) edge (PLS-PPAD)
(PLS-PPAD) edge (PLS-PPADS)
(PLS-PPAD) edge (PPAD)
(PLS-PPADS) edge (PPADS)
(PLS-PPADS) edge (PLS)
(SOPL) edge (PLS-PPADS)
(PPAD) edge (PPADS)
(PPAD) edge (PPA)
(PPADS) edge (PPP)
(PLS) edge (TFNP)
(PPP) edge (TFNP)
(PPA) edge (TFNP);

\hypersetup{hidelinks}
\tikzset{new/.style={-{Stealth[length=6pt]},line width=1pt,YellowOrange}}

\draw[new]
(PLS-PPAD) edge[bend left=31]
node[midway,below,inner sep=2pt] {\cref{thm:EOPL-collapse}}
(EOPL);
\draw[new]
(PLS-PPADS) edge[bend left=31]
node[midway,below,xshift=-7,inner sep=2pt] {\cref{thm:SOPL-collapse}}
(SOPL);

\end{tikzpicture}
\vspace{1mm}
\caption{Class diagram for $\TFNP$ with new inclusions highlighted. An arrow $\cA\to\cB$ denotes $\cA\subseteq \cB$.}
\label{fig:classes}
\vspace{1mm}
\end{figure}

\paragraph{Modern classes.}
Research in the past decade has studied several relatively weak classes of search problems that lie below~$\PLS$ and $\PPAD$. The intersection class $\PLS\cap\PPAD$ is, of course, one immediate such example. This class, however, feels quite artificial at first glance. It does not seem to admit any ``natural'' complete problem. Motivated by this, Daskalakis and Papadimitriou~\cite{Daskalakis2011} introduced the \emph{continuous local search} class $\CLS\subseteq \PLS\cap\PPAD$, which, by its very definition, admits natural complete problems related to the local optimisation of continuous functions over the real numbers (computed by arithmetic circuits). The class $\CLS$ is exceptional in that it captures the complexity of \emph{real continuous} optimisation problems, while most classical search problem classes are designed to capture \emph{combinatorial principles}, often phrased in terms of directed graphs.

In order to understand $\CLS$ from a more combinatorial perspective, Hub{\'a}{\v{c}}ek and Yogev~\cite{Hubacek2020} and Fearnley, Gordon, Mehta, and Savani~\cite{Fearnley2020} introduced the class~$\EOPL\subseteq\CLS$, whose complete problem is the namesake $\eoplLong$ ($\eopl$) problem, defined below. (The paper~\cite{Hubacek2020} initially defined a more restricted ``metered'' version of this problem, but we use the formulation from~\cite{Fearnley2020}, which they prove is equivalent to the one from~\cite{Hubacek2020}.) It is also natural to define a \emph{sink-only} version of $\EOPL$ as suggested by~\cite{Goos2018}.
\begin{description}
\item[\bf $\EOPL$:] $\eoplLong$ ($\eopl$). We are given access to a directed graph $G=(V,E)$ that is acyclic, has in/out-degree at most $1$, and has $|V|=2^n$ nodes; that is, $G$ is a disjoint union of directed paths. The graph is described by a $\poly(n)$-sized circuit: for any node we can compute its successor and predecessor, if any, and also an integer potential, which is guaranteed to increase along the directed edges. In addition, we are given the name of a distinguished source $v^*$. The goal is to find any source or sink other than $v^*$.

\item[\bf $\SOPL$:] $\soplLong$ ($\sopl$). Same as $\eopl$ except the goal is to find a sink.
\end{description}
It is comforting to know that the definition of $\EOPL$ is robust: Ishizuka~\cite{Ishizuka2021} showed that a version of $\eopl$ that guarantees $\poly(n)$ many distinguished sources is still equivalent (via polynomial-time reductions) to the above standard version with a single source.

Fearnley et al.~\cite{Fearnley2020} also defined a more restricted subclass $\UEOPL\subseteq\EOPL$ where the complete problem is $\ueoplLong$, a version of $\eopl$ with a \emph{unique} directed path. They showed that this class contains many important search problems with unique witnesses, such as unique sink orientations, linear complementary problems, {\scshape Arrival}~\cite{Dohrau2017,Gaertner2018}. Other problems known to lie in \UEOPL are a restricted version of the Ham-Sandwich problem~\cite{Chiu2020} and a pizza cutting problem~\cite{Schnider2021}. Fearnley et al.~\cite{Fearnley2020} conjecture that $\UEOPL\neq\EOPL$.

\paragraph{A surprising collapse.}
In a breakthrough, Fearnley, Goldberg, Hollender, and Savani~\cite{Fearnley2021} showed that, despite appearances to the contrary, $\CLS=\PLS\cap\PPAD$.  This goes against the conjecture of Daskalakis and Papadimitriou~\cite{Daskalakis2011} that the classes are distinct, a belief which underlied much of their original motivation for introducing $\CLS$. The nontrivial direction of the collapse is a reduction from a canonical complete problem $\sod\curlywedge\eol\in \PLS\cap\PPAD$ (defined below) to a problem $\kkt\in\CLS$, which involves computing a Karush--Kuhn--Tucker point of a smooth function. We may summarise the main technical result of Fearnley et al.~\cite{Fearnley2021} as
\begin{align} \label{eq:kkt}
\sod\curlywedge\eol &~\leq~ \kkt
&&\text{which implies}\enspace \PLS\cap\PPAD\subseteq\CLS.
\end{align}
Here we use $\leq$ to denote a polynomial-time reduction between search problems. The operator~$\curlywedge$ produces the \emph{meet} of two search problems: the input to problem $\pA\curlywedge\pB$ is a pair $(x,y)$ where $x$ is an instance of $\pA$ and $y$ is an instance of $\pB$ and the goal is to output either a solution to~$x$ or to $y$. Then $\sod\curlywedge\eol$ is the canonical (albeit ``unnatural'') complete problem for $\PLS\cap\PPAD$~\cite{Daskalakis2011}.

\paragraph{Our new collapses.}
Our main results, \cref{thm:EOPL-collapse,thm:SOPL-collapse}, follow from two new reductions, the first one of which strengthens the reduction \cref{eq:kkt} from~\cite{Fearnley2021}:
\begin{align} 
\sod\curlywedge\eol &~\leq~ \eopl
&&\text{which implies}\enspace \PLS\cap\PPAD\subseteq\EOPL,
\label{eq:eopl} \\
\sod\curlywedge\sol &~\leq~ \sopl
&&\text{which implies}\enspace \PLS\cap\PPADS\subseteq\SOPL.
\end{align}
These reductions are between purely combinatorially defined search problems. In the case of~\cref{eq:eopl}, this bypasses the continuous middle-man of $\CLS$ and makes our reduction relatively simple to describe. In particular, we get a new simpler proof of the breakthrough collapse of~\cite{Fearnley2021} by combining~\cref{eq:eopl} with the inclusion $\EOPL\subseteq\CLS$ proved by~\cite{Hubacek2020}. Furthermore, the new collapse implies that problems related to Tarski's fixpoint theorem~\cite{Etessami2020} and to a colourful version of Carath{\'e}odory's theorem~\cite{Meunier2017} lie in \EOPL.

\paragraph{A further surprise?}
Given that the collapse $\CLS=\PLS\cap\PPAD$ was considered extremely surprising by most experts, how surprised should we be by the further collapse
\[
\EOPL\, =\, \CLS\, =\, \PLS\cap\PPAD\enspace ?
\]
Fearnley et al.~\cite{Fearnley2020} wrote regarding $\EOPL$ vs.\ $\CLS$ that ``we actually think it could go either way.'' In the wake of their breakthrough, the paper~\cite{Fearnley2021} explicitly conjectured $\EOPL\neq\CLS$.

For the authors of the present paper, the new collapse did come as an utter shock. When we began work on this project, our intuitions convinced us that, again, $\EOPL\neq \CLS$, a conjecture which had just found its way to the second author's PhD thesis~\cite[Section~7.5]{Hollender2021}. In our convictions, we set out to prove this separation in the \emph{black-box model} where, instead of circuits, the directed graphs are described by black-box oracles. We tried in vain for nine months. The upshot is that \cref{thm:EOPL-collapse,thm:SOPL-collapse} now crush this possibility, as they hold even in the black-box model.

\section{A Unified View: The \texorpdfstring{$\grid$}{Grid} Problem}\label{sec:grid}

In this section we formally define all the problems of interest. We take the unusual approach of defining a single problem (which we call the $\grid$ problem) with various parameters which can be tweaked to obtain all of the problems we study in this paper. This mainly serves two purposes. First of all, it is particularly convenient for presenting our reductions, since it allows us to combine instances from different problems more easily. The second reason is that we believe that this unified view of seemingly very different problems is of independent interest.

\paragraph{The $\grid$ problem.}

For $n \in \mathbb{N}$, let $[n] \coloneqq \{1,2,\dots,n\}$. We define a general problem on a grid $[N] \times [M]$, where $N$ and $M$ should be thought of as being (potentially) exponentially large. The problem involves $A$ paths starting from column $1$ ($[N] \times \{1\}$) and moving from column $i$ ($[N] \times \{i\}$) to column $i+1$ ($[N] \times \{i+1\}$). On the last column ($[N] \times \{M\}$) there are at most $B$ valid ends of paths. If paths are not allowed to merge, then by the Pigeonhole Principle $A > B$ ensures the existence of a solution, i.e., a path that does not end at a valid position on the last column. If paths are allowed to merge, then a solution is guaranteed to exist as long as $B = 0$. To make things more precise, the paths start from nodes $1$ to $A$ in the first column (i.e., $[A] \times \{1\}$), and the valid termination points are nodes $1$ to $B$ in the last column (i.e., $[B] \times \{M\}$).

In more detail, we are given a boolean circuit $S\colon [N] \times [M] \to [N] \cup \{\nul\}$, the \emph{successor circuit}, which allows us to efficiently compute the outgoing edge at a node. If $S(x,y) = \nul$, then $(x,y)$ does not have an outgoing edge. Otherwise, there is an outgoing edge from $(x,y)$ to $(S(x,y),y+1)$. The problem also has two parameters which are used to tweak the definition: $r$ (\emph{reversible}) and $b$ (\emph{bijective}). Intuitively, when $r=1$, we change the representation of paths to make them \emph{reversible}. Namely, in addition to the successor circuit $S$, we are also given access to a \emph{predecessor circuit} $P \colon [N] \times [M] \to [N] \cup \{\nul\}$, which, analogously to $S$, allows us to efficiently compute the incoming edge at a node. In particular, when $r=1$, every node can have at most one incoming edge, i.e., two paths cannot merge. When $r=1$, the other parameter $b$ is used to introduce additional solutions. Namely, when $b=1$, then we do not allow any new paths apart from the original $A$ paths, and we also require that all $B$ valid ends of paths are actually reached by a path. The combination $r=0, b=1$ is not allowed.

We use the term \emph{sink} to refer to a node with at least one incoming edge but no outgoing edge. Similarly, a \emph{source} is a node with an outgoing edge but no incoming edge. The formal definition of the problem is as follows.

\begin{definition}\label{definition:grid_problem}
In the \grid problem, given $N,M,A,B$ with $N \geq A > B \geq 0$ and $M \geq 2$, boolean circuits $S, P \colon [N] \times [M] \to [N] \cup \{\nul\}$, and bits $r,b \in \{0,1\}$, output any of the following:
\begin{enumerate}
    \item $x \in [A]$ such that $S(x,1) = \nul$, \hfill \emph{(missing pigeon/source)}
    \item $x \in [N]$ such that $S(x,M-1) > B$, \hfill \emph{(invalid hole/sink)}
    \item $x \in [N]$ and $y \in [M-2]$ such that\\
    $S(x,y) \neq \nul$ and $S(S(x,y),y+1) = \nul$, \hfill \emph{(pigeon interception/sink)}
    \item If $r=1$ and $b=1$:
    \begin{enumerate}
        \item $(x,y) \in ([N] \times [M-1]) \setminus ([A] \times \{1\})$ such that\\
        $S(x,y) \neq \nul$ and $P(x,y) = \nul$, or \hfill \emph{(pigeon genesis/source)}
        \item $x \in [B]$ such that $P(x,M) = \nul$. \hfill \emph{(empty hole/sink)}
    \end{enumerate}
\end{enumerate}
We also enforce the following two conditions syntactically:
\begin{itemize}
    \item If $r=0$, then $b=0$ and $B=0$.
    \item If $r=1$, then the successor and predecessor circuits are consistent, which can be enforced as follows. The circuit $S$ is replaced by the circuit $\overline{S}$, which on input $(x,y)$ computes $x ' \coloneqq S(x,y)$ and outputs $x'$, unless $(x',y+1) \notin [N] \times [M]$ or $P(x',y+1) \neq x$, in which case it outputs $\nul$. Similarly, the circuit $P$ is replaced by the circuit $\overline{P}$, which on input $(x,y)$ computes $x ' \coloneqq P(x,y)$ and outputs $x'$, unless $(x',y-1) \notin [N] \times [M]$ or $S(x',y-1) \neq x$, in which case it outputs $\nul$.
\end{itemize}
\end{definition}

\paragraph{Canonical complete problems as special cases of $\grid$.}

As defined above, the inputs $N, M, A, B$ of the $\grid$ problem are completely unrestricted, apart from the natural restrictions $N \geq A > B \geq 0$ and $M \geq 2$. By imposing various additional restrictions on these inputs, we obtain the following canonical complete problems; see~\cref{figure:grid_problem_summary}. (Here $\iphp$/$\bphp$ stand for Injective/Bijective Pigeonhole Principle.)
\begin{itemize}
    \item $\sod$: $r=0, b=0, A=1, B=0$. (\PLS-complete)
    \item $\sopl$: $r=1, b=0, A=1, B=0$. (\SOPL-complete)
    \item $\eopl$: $r=1, b=1, A=1, B=0$. (\EOPL-complete)
    \item $\iphp$: $r=1, b=0, M=2, N=A=B+1$. (\PPADS-complete)
    \item $\bphp$: $r=1, b=1, M=2, N=A=B+1$. (\PPAD-complete)
\end{itemize}
Note that beyond those restrictions, the inputs are left unrestricted. For example, in $\sod$, the input $M$ can be very large, which is indeed needed for the problem to be \PLS-complete.

\begin{figure}[t]
\centering
\begin{subfigure}[b]{0.33\textwidth}
    \centering
    \begin{tikzpicture}[y=-1cm, scale=1]
\coordinate (p11) at (\labelspace, \labelspace); \coordinate (p12) at (\labelspace + \spacex, \labelspace); \coordinate (p13) at (\labelspace + 2*\spacex, \labelspace); \coordinate (p14) at (\labelspace + 3*\spacex, \labelspace); 
\coordinate (p21) at (\labelspace, \labelspace + \spacey); \coordinate (p22) at (\labelspace + \spacex, \labelspace + \spacey);	 \coordinate (p23) at (\labelspace + 2*\spacex, \labelspace + \spacey); \coordinate (p24) at (\labelspace + 3*\spacex, \labelspace + \spacey); 
\coordinate (p31) at (\labelspace, \labelspace + 2*\spacey); \coordinate (p32) at (\labelspace + \spacex, \labelspace + 2*\spacey); \coordinate (p33) at (\labelspace + 2*\spacex, \labelspace + 2*\spacey); \coordinate (p34) at (\labelspace + 3*\spacex, \labelspace + 2*\spacey); 
\coordinate (p41) at (\labelspace, \labelspace + 3*\spacey); \coordinate (p42) at (\labelspace + \spacex, \labelspace + 3*\spacey); \coordinate (p43) at (\labelspace + 2*\spacex, \labelspace + 3*\spacey); \coordinate (p44) at (\labelspace + 3*\spacex, \labelspace + 3*\spacey); 

\tikzstyle{node_regular} = [node_regular_intro]

\node[node_a]        (P11) at (p11) {};
\node[node_regular]  (P12) at (p12) {};
\node[node_regular]  (P13) at (p13) {};
\node[node_solution] (P14) at (p14) {};
\node[node_regular]  (P21) at (p21) {};
\node[node_regular]  (P22) at (p22) {};
\node[node_regular]  (P23) at (p23) {};
\node[node_regular]  (P24) at (p24) {};
\node[node_regular]  (P31) at (p31) {};
\node[node_regular]  (P32) at (p32) {};
\node[node_solution] (P33) at (p33) {};
\node[node_regular]  (P34) at (p34) {};
\node[node_regular]  (P41) at (p41) {};
\node[node_regular]  (P42) at (p42) {};
\node[node_regular]  (P43) at (p43) {};
\node[node_regular]  (P44) at (p44) {};

\draw[edge_regular] (P11) -- (P12);
\draw[edge_regular] (P12) -- (P23);
\draw[edge_regular] (P22) -- (P23);
\draw[edge_regular] (P23) -- (P14);
\draw[edge_regular] (P21) -- (P32);
\draw[edge_regular] (P32) -- (P33);
\draw[edge_regular] (P42) -- (P33);

\end{tikzpicture}
    \vspace{2mm}
    \caption{$\sodLong$ ($\sod$)}
    \label{figure:intro_iter}
\end{subfigure}%
\begin{subfigure}[b]{0.33\textwidth}
    \centering
    \begin{tikzpicture}[y=-1cm, scale=1]
\coordinate (p11) at (0, 0); 				 \coordinate (p12) at (\spacex, 0);
\coordinate (p21) at (0, \spacey);	 \coordinate (p22) at (\spacex, \spacey);
\coordinate (p31) at (0, 2*\spacey); \coordinate (p32) at (\spacex, 2*\spacey);
\coordinate (p41) at (0, 3*\spacey); \coordinate (p42) at (\spacex, 3*\spacey);

\tikzstyle{node_regular} = [node_regular_intro]

\node[xshift=-15] at (p11) {\mbox{}};
\node[xshift=15] at (p12) {\mbox{}};

\node[node_a] (P11) at (p11) {};
\node[node_a] (P21) at (p21) {};
\node[node_a_solution] (P31) at (p31) {};
\node[node_a_solution] (P41) at (p41) {};
\node[node_b] (P12) at (p12) {};
\node[node_b] (P22) at (p22) {};
\node[node_b] (P32) at (p32) {};
\node[node_regular] (P42) at (p42) {};

\draw[edge_regular] (P11) -- (P22);
\draw[edge_regular] (P21) -- (P12);

\node[node_notice] at (p32) {};
\end{tikzpicture}
    \vspace{2mm}
    \caption{$\iphp$}
    \label{figure:intro_php}
\end{subfigure}%
\begin{subfigure}[b]{0.33\textwidth}
    \centering
    \begin{tikzpicture}[y=-1cm, scale=1]
\coordinate (p11) at (0, 0); 				 \coordinate (p12) at (\spacex, 0); 				 \coordinate (p13) at (2*\spacex, 0); 				\coordinate (p14) at (3*\spacex, 0);
\coordinate (p21) at (0, \spacey);	 \coordinate (p22) at (\spacex, \spacey); 	 \coordinate (p23) at (2*\spacex, \spacey); 	\coordinate (p24) at (3*\spacex, \spacey);
\coordinate (p31) at (0, 2*\spacey); \coordinate (p32) at (\spacex, 2*\spacey); \coordinate (p33) at (2*\spacex, 2*\spacey); \coordinate (p34) at (3*\spacex, 2*\spacey);
\coordinate (p41) at (0, 3*\spacey); \coordinate (p42) at (\spacex, 3*\spacey); \coordinate (p43) at (2*\spacex, 3*\spacey); \coordinate (p44) at (3*\spacex, 3*\spacey);

\tikzstyle{node_regular} = [node_regular_intro]

\node[node_a]       (P11) at (p11) {};
\node[node_regular] (P12) at (p12) {};
\node[node_regular] (P13) at (p13) {};
\node[node_solution] (P14) at (p14) {};
\node[node_regular] (P21) at (p21) {};
\node[node_regular] (P22) at (p22) {};
\node[node_regular] (P23) at (p23) {};
\node[node_regular] (P24) at (p24) {};
\node[node_regular] (P31) at (p31) {};
\node[node_regular] (P32) at (p32) {};
\node[node_regular] (P33) at (p33) {};
\node[node_solution] (P34) at (p34) {};
\node[node_regular] (P41) at (p41) {};
\node[node_regular] (P42) at (p42) {};
\node[node_solution] (P43) at (p43) {};
\node[node_regular] (P44) at (p44) {};

\draw[edge_regular] (P11) -- (P12);
\draw[edge_regular] (P12) -- (P23);
\draw[edge_regular] (P23) -- (P14);

\draw[edge_regular] (P32) -- (P33);
\draw[edge_regular] (P33) -- (P34);

\draw[edge_regular] (P31) -- (P42);
\draw[edge_regular] (P42) -- (P43);

\node[node_notice] at (p31) {};
\node[node_notice] at (p32) {};
\end{tikzpicture}
    \vspace{2mm}
    \caption{$\soplLong$ ($\sopl$)}
    \label{figure:intro_sopl}
\end{subfigure}
\caption{Examples of $\grid$ problems. Square nodes are valid starts of paths (top-most $A$ nodes in the first column) and diamonds are valid ends of paths (top-most $B$ nodes in the last column). Solutions are drawn in red. However, for visual clarity we highlight the actual sinks rather than the \emph{sink predecessors} as in \cref{definition:grid_problem}. Nodes with a $\nul$ successor are drawn without an outgoing pointer.
(\ref{figure:intro_iter}) has parameters $(r = 0, b = 0, A=1, B=0)$ and defines an $\sod$ instance. Only the successor circuit is drawn, as the predecessor circuit is not used by $\sod$. In particular, directed paths can \emph{merge}, such as for node $(2, 3)$.
(\ref{figure:intro_php}) has parameters $(r=1, b=0, A=N, B=N-1)$ and defines an $\iphp$ instance. The diamond with a green circle would be a solution of $\bphp$ (with $b=1$) but is not a solution of $\iphp$.
(\ref{figure:intro_sopl}) has parameters $(r =1, b= 0, A=1, B=0)$ and defines an $\sopl$ instance. Sources with green circles would be solutions of an $\eopl$ instance (with $b=1$).
}
\label{figure:grid_problem_summary}
\end{figure}
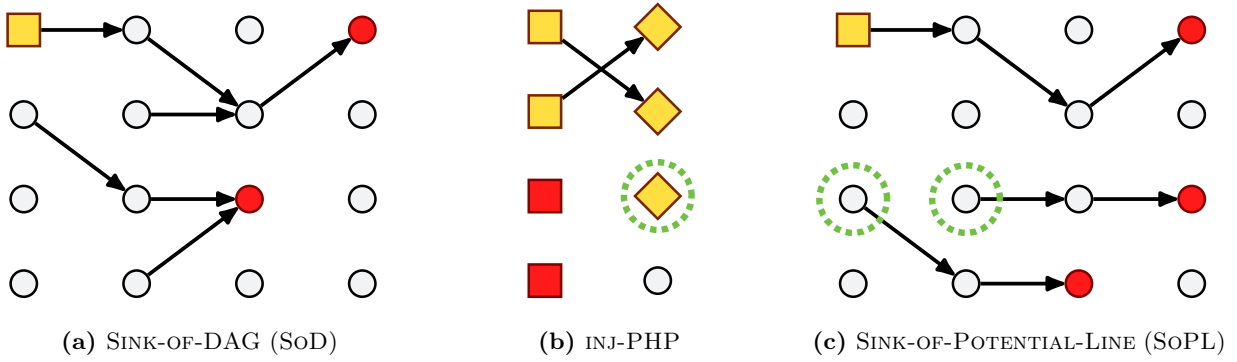

\begin{remark}
Here we have slightly abused notation by calling these problems $\sod$, $\sopl$ and $\eopl$ even though their original definitions (in~\cite{Johnson1988,Goos2018,Fearnley2020}, respectively) do not use a grid structure, and instead come with an additional circuit computing the potential of any node. It is not too hard to see that these grid-versions of the problems are indeed polynomial-time equivalent to the original versions. The main idea is that the grid implicitly provides a potential value for every node $(x,y)$, namely its column number $y$. Thus, given such a problem on a grid, it is easy to define a potential circuit by simply assigning the potential value $y$ to any node $(x,y)$ of the grid.

The other direction is slightly more involved. Consider an instance of one of the original problems with vertex set $V = [N]$ and potential values lying in $P = [M]$. Without loss of generality, we can assume that along any edge the potential increases by exactly one. Indeed, this was proved explicitly by~\cite{Fearnley2020} when they reduced $\eopl$ to $\eoml$, and the same idea applies to $\sod$ and $\sopl$ as well. The reduction to the grid-version of the problem is then obtained by identifying a vertex $x \in V$ that has potential $p \in P$ with the node $(x,p)$ on the $[N] \times [M]$ grid.
\end{remark}

The following is essentially folklore (see, e.g., \cite{Buresh2004}), so we only provide a brief proof sketch.

\begin{lemma}\label{lem:PPAD-as-PHP}
$\iphp$ and $\bphp$ are respectively \PPADS- and \PPAD-complete.
\end{lemma}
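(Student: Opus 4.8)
The plan is to establish membership and hardness separately, in each case with a single construction that handles the injective and bijective versions simultaneously --- the only difference being which solution types the bit $b$ switches on.

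\textbf{Membership.} Fix an $\iphp$ (resp.\ $\bphp$) instance. Since $M=2$ and $N=A=B+1$, after the syntactic consistency enforcement the circuits $S,P$ just encode a partial matching between $N$ pigeons $p_1,\dots,p_N$ (column $1$) and $N$ holes $h_1,\dots,h_N$ (column $2$), of which $h_N$ is the unique \emph{invalid} hole. A solution is a pigeon matched to nothing (type $1$), a pigeon matched to $h_N$ (type $2$), or --- only when $b=1$ --- a valid hole $h_j$, $j\in[N-1]$, matched to nothing (type $4$b). I would reduce to $\sol$ (resp.\ $\eol$) by building an in/out-degree-$\le 1$ graph on the vertex set $\{c_1,\dots,c_{N-1}\}\cup\{p_N\}$ in which $c_i$ plays the role of \emph{both} pigeon $i$ and hole $i$: put an edge $c_i\to c_j$ (resp.\ $p_N\to c_j$) whenever $S(i,1)=j\in[N-1]$ (resp.\ $S(N,1)=j\in[N-1]$), and leave $c_i$ without an out-edge when pigeon $i$ is unmatched or matched to $h_N$. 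Consistency of $S,P$ (in particular injectivity) makes this a legal graph, $p_N$ has in-degree $0$, and --- after a one-line check that pigeon $N$ is not itself a solution, in which case we output it directly --- $p_N$ is a genuine distinguished source. By construction the sinks of this graph are exactly the type-$1$/type-$2$ solutions, and its sources other than $p_N$ are exactly the type-$4$b solutions. Hence an $\iphp$ instance ($b=0$, type $4$b off) reduces to $\sol$ (``find a sink'') and a $\bphp$ instance ($b=1$) reduces to $\eol$ (``find a source or sink $\ne v^*$''). (Minor padding of the $N$ vertices up to a power of $2$ by adding self-loop vertices is routine.)

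\textbf{Hardness.} For the converse I would reduce $\sol$ to $\iphp$ and $\eol$ to $\bphp$ by, again, a single transformation. Given an $\eol$/$\sol$ graph $G$ on $[2^n]$ with distinguished source $v^*$, relabel the vertices of $G$ so that $v^*$ becomes the vertex $2^n$, set $N\coloneqq 2^n$, and form the grid instance with $M=2$, $A=N$, $B=N-1$ by letting pigeon $i$ be the out-slot of vertex $i$ and hole $j$ the in-slot of vertex $j$: $S(i,1)\coloneqq \mathrm{succ}_G(i)$ and $P(j,2)\coloneqq \mathrm{pred}_G(j)$. Consistency of $G$ yields consistency of the grid instance. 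Since $v^*$ is a source, no vertex has successor $v^*$, so the invalid hole $h_N$ (the in-slot of $v^*$) receives no pigeon; since $v^*$ has out-degree $1$, pigeon $N$ is matched to a valid hole. Therefore the type-$1$ solutions (unmatched pigeons) are exactly the sinks of $G$, which are automatically $\ne v^*$; the type-$2$ solutions do not occur; and the type-$4$b solutions (unmatched valid holes) are exactly the sources of $G$ other than $v^*$. Thus $\iphp$-solutions are precisely the sinks of $G$ (matching $\sol$), and switching to $b=1$ adds exactly the remaining sources (matching $\eol$).

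\textbf{The one delicate point.} Since the lemma is folklore, nothing here is difficult; the only thing that needs care is the distinguished source $v^*$. It must come out as an honest source of the constructed graph and must never be reportable as a solution on either side --- which is precisely why in the membership direction pigeon $N$ and the invalid hole $h_N$ are kept apart from the identified slots $c_1,\dots,c_{N-1}$ (necessitating the precheck on pigeon $N$), and why in the hardness direction $v^*$ is relabelled to index $N$ so that its out-slot becomes the extra pigeon and its in-slot becomes the invalid hole. Once this is set up, the only remaining bookkeeping is that the bit $b$ toggles exactly the ``unmatched valid hole'' solutions, which on the graph side is precisely the gap between ``find a sink'' ($\PPADS$) and ``find a source or sink'' ($\PPAD$).
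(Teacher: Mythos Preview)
Your proof is correct and follows essentially the same approach as the paper's sketch. The only cosmetic differences are that for membership the paper keeps pigeons and holes as separate vertices in $[N]\times[2]$ and adds back-edges $(x,2)\to(x,1)$ for $x\le N-1$, whereas you contract those back-edges by identifying pigeon~$i$ with hole~$i$; and for hardness the paper routes isolated vertices of $G$ to self-edges $(x,1)\to(x,2)$ rather than leaving them as (harmless) extra solutions --- neither difference is material.
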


\begin{proof}[Proof Sketch]
To see that $\bphp$ lies in \PPAD, we can reduce to $\eol$ (see, e.g., \cite{Daskalakis2009} for a formal definition) with vertex set $V = [N] \times [2]$ as follows: add a directed edge from node $(x,2)$ to node $(x,1)$ for all $x \leq A-1$. By taking $(x,A)$ as the distinguished source node, this yields an $\eol$ instance with the same solutions as the original $\bphp$ instance. On the other hand, given an instance of $\eol$ with vertex set $V = [N]$ and distinguished source node $N$ (without loss of generality), we construct an instance of $\bphp$ on $[N] \times [2]$ as follows: for any isolated vertex $x \in [N]$, create an edge from $(x,1)$ to $(x,2)$; for any edge from $x$ to $y$, create an edge from $(x,1)$ to $(y,2)$. This simple reduction proves the \PPAD-hardness of $\bphp$. The exact same constructions can be used to prove that $\iphp$ is \PPADS-complete, by reducing to and from the $\sol$ problem (formally defined by Beame et al.~\cite{Beame1998}, who call it {\scshape Sink}).
\end{proof}

In \cref{sec:discussion}, we briefly explain how an extended version of the $\grid$ problem can be used to also capture \PPP, the class defined by Papadimitriou~\cite{Papadimitriou1994} to capture a version of the Pigeonhole Principle where edges can only be computed efficiently in the forward direction. We do not currently see any natural way of extending the definition of the $\grid$ problem so that it also captures the class \PPA.

\section{Path-Pigeonhole Problems}

In this section we use the $\grid$ problem to define some interesting extensions of the two pigeonhole problems. Namely, we consider the case where, instead of just two columns, there are many columns. In a certain sense, this corresponds to allowing the pigeons to travel for a long time before reaching a hole. In particular, we can no longer efficiently tell in which hole a given pigeon will land. This allows us to show that the problems remain hard even when there are significantly more pigeons than holes. This fact, stated in \cref{lem:inj-path-php} below, will be crucial to obtain our main result later.

Let $f\colon \mathbb{N} \to \mathbb{N}$ be a polynomial-time computable function with $f(t) > t$. In this section, we consider the following restrictions of $\grid$:

\begin{itemize}
    \item $\pathiphp_f$: $r=1, b=0, A = f(B)$.
    \item $\pathbphp_f$: $r=1, b=1, A = f(B)$.
\end{itemize}

The following lemma is an important ingredient for the proof of our main result.

\begin{lemma}\label{lem:inj-path-php}
Let $f(t)>t$ be polynomial-time computable.
There exists a reduction $\iphp\leq \pathiphp_f$ that maps an instance with parameters $(A,B) = (T+1,T)$ to an instance with parameters $(A,B) = (f(T),T)$ and $(N,M) = (f(T),f(T)-T+1)$.
\end{lemma}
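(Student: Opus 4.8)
The plan is to realise the target instance as a \emph{descending staircase} whose length $M-1=f(T)-T$ matches the number of pigeons in excess of the $B=T$ holes, so that each transition either lands one more pigeon safely in a hole or exposes a solution of the given $\iphp$ instance. First I would record the input. Written in grid form, the $\iphp$ instance has $r=1$, $b=0$, $M=2$, $N=A=T+1$, $B=T$, and (after the consistency enforcement of \cref{definition:grid_problem}) its entire content is the partial injection $\sigma(i)\coloneqq\overline{S}_{\iphp}(i,1)$ on $[T+1]$ together with its partial inverse $\sigma^{-1}(j)\coloneqq\overline{P}_{\iphp}(j,2)$; the solutions of this instance are exactly the indices $i\in[T+1]$ with $\sigma(i)\notin[T]$ — either $\sigma(i)=\nul$ (a missing pigeon) or $\sigma(i)=T+1$ (an invalid hole) — since the remaining two solution types are vacuous here. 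The output will be the $\grid$ instance with $r=1$, $b=0$, $N=A=f(T)$, $B=T$, $M=f(T)-T+1$, which is a legal $\pathiphp_f$ instance: the constraints $N\ge A>B\ge 0$ and $M\ge 2$ hold because $f(T)>T$, and the circuits built below have size $\poly$ in that of the input.

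Next I would lay out the grid. For a column $y\le M-1$, call $W_y\coloneqq\{M-y,\,M-y+1,\dots,\,M-y+T\}$ its \emph{window} ($T+1$ consecutive rows, always a subset of $[N]$); the rows $\{1,\dots,M-y-1\}$ are \emph{low} and the rest are \emph{high}. The low rows together with $W_1$ exhaust $[N]$, and $W_{M-1}=[T+1]$. The successor circuit sends each low row straight down, $S(x,y)=x$; it routes a window row $x\in W_y$, at window-position $i\coloneqq x-(M-y)+1$, through the $\iphp$ map — if $\sigma(i)=j\in[T]$ then $S(x,y)=(M-y)+j-1$, which lands in $W_{y+1}$ at position $j+1$, and if $\sigma(i)\notin[T]$ then $S(x,y)=\nul$, so the path \emph{dies}; and it sends every high row to $\nul$. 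The predecessor circuit $P$ is the exact inverse of this map (a row $x\le M-y$ of column $y\ge 2$ has $P(x,y)=x$; a row $(M-y)+k$ with $k\in[T]$ has $P(x,y)=(M-y+1)+\sigma^{-1}(k)-1$, or $\nul$ if $\sigma^{-1}(k)=\nul$; every remaining row, and all of column $1$, has $P=\nul$). Because $\sigma$ and $\sigma^{-1}$ come from the already-consistent circuits of the $\iphp$ instance, $S$ and $P$ are mutually consistent, so the enforcement step of \cref{definition:grid_problem} leaves them unchanged. A short induction on $y$ then shows that, as long as no window row has died yet, the occupied rows of column $y$ are exactly $\{1,\dots,T+M-y\}$: the low rows survive, the at most $T$ surviving window rows land bijectively onto $\{M-y,\dots,M-y+T-1\}$, and position $1$ of each new window is replenished by the top low row; in particular column $M$'s survivors occupy exactly $[B]=[T]$.

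Then I would map solutions back. A type-$1$ solution is an $x\in[A]=[N]$ with $S(x,1)=\nul$; by the construction $x$ must lie in $W_1$ with $\sigma(i)\notin[T]$ for $i=x-(M-1)+1$, so $i$ is a solution of the $\iphp$ instance. A type-$3$ solution is $(x,y)$ with $S(x,y)=x'\ne\nul$ and $S(x',y+1)=\nul$, where $y+1\le M-1$; the only successorless rows of a column $\le M-1$ that receive an incoming edge are window rows killed by $\sigma$, so $i'=x'-(M-(y+1))+1$ is a solution of the $\iphp$ instance. No type-$2$ solution exists, because the last transition sends every surviving pigeon into $[T]=[B]$; and $b=0$ rules out type-$4$ solutions. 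Hence every solution of the constructed instance yields, in polynomial time, a solution of the input instance, which is the desired reduction, with the claimed parameters $(A,B)=(f(T),T)$ and $(N,M)=(f(T),f(T)-T+1)$.

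I expect the one genuinely fiddly part to be the staircase bookkeeping: verifying that a window of $T+1$ rows always fits inside $[N]$ for every $y\le M-1$, that the occupied region really loses exactly one row per column (so that the only sinks are the intended window deaths), that no edge ever points into a high row, and that $S$ and $P$ truly invert one another so the consistency enforcement is vacuous. Once these invariants are pinned down, the solution analysis above is immediate.
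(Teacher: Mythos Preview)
Your proposal is correct and uses essentially the same idea as the paper: repeatedly apply the given $\iphp$ map while feeding in one extra pigeon per column via a staircase, so that after $f(T)-T$ transitions the $f(T)$ starting paths have been funnelled down to $T$, and any premature death exposes an $\iphp$ solution. The only real difference is bookkeeping: the paper keeps the $\iphp$ block fixed at rows $[T+1]$ and lets the staircase sit below it (row $x$ maps to $x-1$ for $T+2\le x\le f(T)-y+1$), whereas you slide a window $W_y$ down the grid and keep the low rows fixed; these are mirror images of one another. One minor wording slip: you define ``the rest'' of the rows as \emph{high} immediately after defining the low rows, which literally makes $W_y$ high, but your subsequent case split makes clear that you intend high to mean the rows strictly above $W_y$.
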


\begin{proof}
The idea behind this reduction is very simple. Intuitively, we have the ability to ``merge'' $T+1$ paths into $T$ paths by using the $\iphp$ instance. Namely, we can go from having $T+1$ paths on some column $i$ to having only $T$ paths on the next column $i+1$, and such that finding a ``mistake'', i.e., a path that stops between the two columns, requires solving the $\iphp$ instance. In particular, if we start with some $N$ paths, where $N \geq T+1$, then we can ``merge'' those paths into $N-1$ paths by ``merging'' the first $T+1$ paths into $T$ paths, and leaving the remaining $N-(T+1)$ paths unchanged. Applying this idea repeatedly, we can ``merge'' $f(T)$ paths into just $T$ paths in $f(T)-T$ steps. This results in an instance of $\pathiphp_f$ with $f(T)-T+1$ columns, where every solution yields a solution to the $\iphp$ instance; see \cref{figure:reduction_php_pathphp}. More formally, let $(S,P)$ denote an instance of $\iphp$ with parameters $A=T+1$ and $B=T$. Without loss of generality, we can assume that no pigeon goes to the invalid hole, i.e., $S(x,1) \neq T+1$ for all $x \in [T+1]$. Indeed, if there is such an edge, we can just remove it and this does not change the set of solutions; the node pointing to the invalid hole was a solution before, and now it is still a solution, because it has no successor. We construct an instance $(\widehat{S}, \widehat{P})$ of $\pathiphp_f$ on the $[N] \times [M]$ grid, where $N=f(T)$, $M=f(T)-T+1$, $A=f(T)$ and $B=T$. The successor circuit $\widehat{S}$ is defined as follows:
\begin{equation*}
    \widehat{S}(x,y)
    ~\coloneqq~
    \begin{cases}
    S(x,1) & \text{if } x \in [T+1] \text{ and } y \in [M-1],\\
    x-1 & \text{if } T+2 \leq x \leq f(T) - y + 1 \text{ and } y \in [M-1],\\
    \nul & \text{otherwise},
    \end{cases}
\end{equation*}
and the predecessor circuit $\widehat{P}$ is then defined accordingly to be consistent with $\widehat{S}$. Both circuits can be constructed in polynomial time, given $S$ and $P$, and given that $f$ can be computed in polynomial time. It is straightforward to check that any solution of the constructed instance yields a solution to the original $\iphp$ instance.
\end{proof}

The same proof idea also yields that $\bphp\leq \pathbphp_f$.

\begin{figure}[t!]
\centering
\begin{tikzpicture}[y=-1cm, scale=1]
\newcommand{\labelshift}{-.9}
\newcommand{\dx}{-3}

\coordinate (p11) at (0, 0); \coordinate (p12) at (\spacex, 0); \coordinate(p13) at (2*\spacex, 0); \coordinate (p14) at (3*\spacex, 0); \coordinate (p15) at (4*\spacex, 0);
\coordinate (p21) at (0, \spacey);
\coordinate (p31) at (0, 2*\spacey);
\coordinate (p41) at (0, 3*\spacey);
\coordinate (p51) at (0, 4*\spacey);
\coordinate (p61) at (0, 5*\spacey);

\coordinate (p22) at (p12 |- p21); \coordinate(p23) at (p13 |- p21); \coordinate(p24) at (p14 |- p21); \coordinate(p25) at (p15 |- p21);
\coordinate (p32) at (p12 |- p31); \coordinate(p33) at (p13 |- p31); \coordinate(p34) at (p14 |- p31); \coordinate(p35) at (p15 |- p31);
\coordinate (p42) at (p12 |- p41); \coordinate(p43) at (p13 |- p41); \coordinate(p44) at (p14 |- p41); \coordinate(p45) at (p15 |- p41);
\coordinate (p52) at (p12 |- p51); \coordinate(p53) at (p13 |- p51); \coordinate(p54) at (p14 |- p51); \coordinate(p55) at (p15 |- p51);
\coordinate (p62) at (p12 |- p61); \coordinate(p63) at (p13 |- p61); \coordinate(p64) at (p14 |- p61); \coordinate(p65) at (p15 |- p61);

\coordinate (c11) at ([shift=({-\polygonshift,-\polygonshift})]p11);
\coordinate (c14) at ([shift=({\polygonshift,-\polygonshift})]p14);
\coordinate (c54) at ([shift=({\polygonshift,\polygonshift})]p54);
\coordinate (c51) at ([shift=({-\polygonshift, \polygonshift})]p51);

\node[node_text] at (1.5*\spacex, \labelshift) {\textcolor{color_gadget_PATHPHP_label}{$\pathiphp_f$}};
\draw[gadget_PATHPHP] (c11) -- (c14) -- (c54) -- (c51) -- cycle;

\node[node_a] (P11) at (p11) {};
\node[node_a_solution] (P21) at (p21) {};
\node[node_a] (P31) at (p31) {};
\node[node_a] (P41) at (p41) {};
\node[node_a] (P51) at (p51) {};

\node[node_regular] (P12) at (p12) {};
\node[node_solution] (P22) at (p22) {};
\node[node_regular] (P32) at (p32) {};
\node[node_regular] (P42) at (p42) {};
\node[node_regular] (P52) at (p52) {};

\node[node_regular] (P13) at (p13) {};
\node[node_solution] (P23) at (p23) {};
\node[node_regular] (P33) at (p33) {};
\node[node_regular] (P43) at (p43) {};
\node[node_regular] (P53) at (p53) {};

\node[node_b] (P14) at (p14) {};
\node[node_b] (P24) at (p24) {};
\node[node_regular] (P34) at (p34) {};
\node[node_regular] (P44) at (p44) {};
\node[node_regular] (P54) at (p54) {};

\draw[edge_regular] (P51) -- (P42);
\draw[edge_regular] (P42) -- (P33);
\draw[edge_regular] (P41) -- (P32);

\draw[edge_php] (P31) -- (P12);
\draw[edge_php] (P11) -- (P22);
\draw[edge_php] (P32) -- (P13);
\draw[edge_php] (P12) -- (P23);
\draw[edge_php] (P33) -- (P14);
\draw[edge_php] (P13) -- (P24);


\coordinate (o11) at (\dx - \spacex, 1*\spacey); \coordinate(o12) at (\dx, 1*\spacey);
\coordinate (o21) at (\dx - \spacex, 2*\spacey); \coordinate(o22) at (\dx, 2*\spacey);
\coordinate (o31) at (\dx - \spacex, 3*\spacey); \coordinate(o32) at (\dx, 3*\spacey);

\coordinate (d11) at ([shift=({-\polygonshift,-\polygonshift})]o11);
\coordinate (d12) at ([shift=({\polygonshift,-\polygonshift})]o12);
\coordinate (d32) at ([shift=({\polygonshift,\polygonshift})]o32);
\coordinate (d31) at ([shift=({-\polygonshift, \polygonshift})]o31);

\node[node_text] at (\dx -.5*\spacex, 1*\spacey + \labelshift) {\textcolor{color_gadget_PHP_label}{$\iphp$}};
\draw[gadget_PHP] (d11) -- (d12) -- (d32) -- (d31) -- cycle;

\node[node_a] (O11) at (o11) {};
\node[node_b] (O12) at (o12) {};
\node[node_a_solution] (O21) at (o21) {};
\node[node_b] (O22) at (o22) {};
\node[node_a] (O31) at (o31) {};
\node[node_regular] (O32) at (o32) {};

\draw[edge_php] (O11) -- (O22);
\draw[edge_php] (O31) -- (O12);


\node[node_text] at (.5*\dx, 2*\spacey) {\Large $\boldsymbol{\leq}$};
\end{tikzpicture}
\caption{The reduction $\iphp\leq\pathiphp_f$ in \cref{lem:inj-path-php}. We build a $\pathiphp_f$ instance by chaining several identical $\iphp$ instances side-by-side. Note that a solution to the $\pathiphp_f$ instance can be directly mapped to one for the original $\iphp$ instance.}
\label{figure:reduction_php_pathphp}
\end{figure}
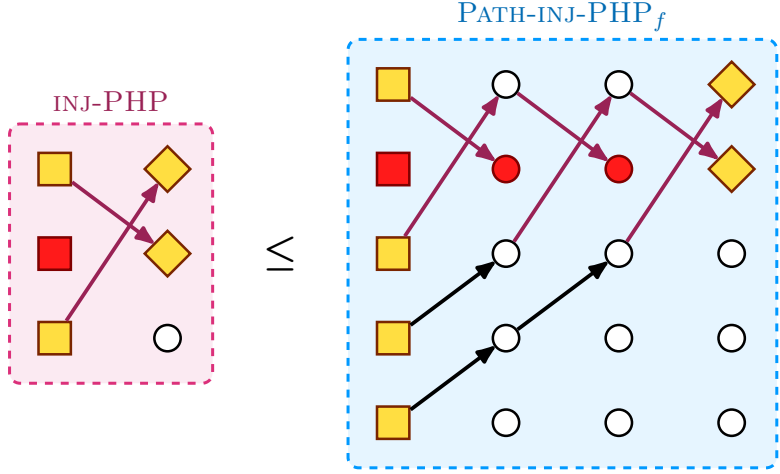

\section{\texorpdfstring{$\SOPL = \PLS\cap\PPADS$}{SOPL = PLS \cap PPADS}}

In this section, we prove \cref{thm:SOPL-collapse}, namely $\SOPL = \PLS\cap\PPADS$. To prove this we provide a reduction from a $\PLS\cap\PPADS$-complete problem to the canonical \SOPL-complete problem.

\begin{lemma}
$\sod\curlywedge\iphp \leq \sopl$.
\label{lem:iphp_or_iter_reduces_to_sopl}
\end{lemma}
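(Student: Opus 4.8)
I would reduce in two steps. \emph{Step~1:} trade the rigid two-column instance $\iphp$ for a \emph{long} path-pigeonhole instance. Given an instance of $\sod\curlywedge\iphp$, write $[N_1]\times[M_1]$ for the $\sod$ grid and normalize the $\sod$ instance in the routine way — unit potential increments along every edge, as in the Remark of \cref{sec:grid}, plus subdividing columns so that at most one vertex acquires a second in-edge per column — so that in particular there are at most $N_1M_1$ ``merges''. Pad $\iphp$ with dummy pigeons and holes until it has $T\geq N_1M_1$ holes, and apply \cref{lem:inj-path-php} with $f(t)=t^2$: this replaces the $\iphp$ coordinate by a $\pathiphp_f$ instance $H$ with $f(T)$ pigeons and $T$ holes, living on an $[f(T)]\times[f(T)-T+1]$ grid, every solution of which yields an $\iphp$ solution. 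Since $\curlywedge$ is monotone under reductions in each coordinate, it then suffices to prove $\sod\curlywedge\pathiphp_f\leq\sopl$. Going ``long'' is the whole point: $H$ has $\geq M_1$ columns, so it can run in lock-step with the exponentially long $\sod$ instance, which the two-column $\iphp$ cannot.

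\emph{Step~2:} build the combined $\sopl$ instance. Its grid has rows split into a ``$\sod$-part'' $[N_1]$ and an ``$H$-part'' (the pigeons of $H$), columns carrying a joint potential, and distinguished source the $\sod$ source $v^*$ placed at row $1$, column $1$. Call an in-edge of a $\sod$-vertex $w$ \emph{canonical} if it comes from the smallest-indexed $\sod$-in-neighbour of $w$. The successor rule: from a $\sod$-part vertex $v$ in column $y$ with $S(v,y)=w$, stop if $w=\nul$ (a genuine $\sod$ sink, which we output); follow the edge inside the $\sod$-part if it is canonical; otherwise re-route into the $H$-part, at a ``fresh'' pigeon made available in that column (well-defined thanks to the one-merge-per-column normalization). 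From an $H$-part vertex, follow $H$'s successor circuit, stopping — and outputting the corresponding $\iphp$ solution via \cref{lem:inj-path-php} — whenever $H$ stops. Then the distinguished line walks canonical $\sod$-edges until it either stops at a $\sod$ sink or crosses a non-canonical edge and thereafter lives inside $H$, where, since $H$ is total and $f(T)\gg T$ leaves it enough room not to run off the grid first, it reaches an $\iphp$-cutoff. Dually, every sink of the whole construction is a $\sod$ sink or an $H$-cutoff, so any $\sopl$-solution decodes back to a solution of $\sod\curlywedge\pathiphp_f$, hence of $\sod\curlywedge\iphp$.

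The step I expect to be the real obstacle is checking that this object is a \emph{syntactically legal} $\sopl$ instance: it needs an efficient predecessor circuit, and — since $\sopl$ has no valid endpoints ($B=0$) and counts \emph{every} sink as a solution, not just the one on the distinguished line — each of its sinks must decode to a genuine solution. Inverting a canonical $\sod$-edge into $w$ is easy (return the smallest $\sod$-in-neighbour of $w$); inverting a re-route is easy provided the fresh pigeon assigned to column $y$ is a fixed, efficiently computable function of $y$; and $H$ itself is reversible by the construction in \cref{lem:inj-path-php}. What will take real care is: choosing which pigeons of $H$ absorb the re-routed vertices, and staggering them across columns so that $H$'s paths are disturbed neither into inconsistency nor into fresh sinks; ensuring the $T$ ``valid hole'' endpoints of $H$ do not survive into the $\sopl$ instance as spurious sinks — e.g.\ by continuing them into the last column of the grid (where a sink is not a solution) or by folding them back into the $\sod$-part, so that the distinguished line genuinely terminates at an $\iphp$-cutoff or a $\sod$ sink; and verifying that the remaining ``junk'' vertices of the grid are isolated, hence harmless. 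This bookkeeping is the technical heart, and it is exactly what forces both the normalization and the generous budget $f(T)\gg T$ of Step~1; everything else is the soft combinatorics above. (The companion reduction $\sod\curlywedge\bphp\leq\eopl$ behind \cref{thm:EOPL-collapse} follows the same recipe, with $\bphp$/$\pathbphp_f$ replacing $\iphp$/$\pathiphp_f$.)
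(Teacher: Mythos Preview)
Your construction has a genuine gap at exactly the point the paper flags as obstacle~(i): in $\sod$ we have $r=0$, so there is \emph{no} predecessor circuit, and computing in-neighbours or in-degrees of a node requires scanning the whole previous column, which is exponential. Both pillars of your Step~2 rely on this forbidden operation. Your ``canonical'' edge is defined as the one from the smallest-indexed in-neighbour of $w$; to follow it forward you must test whether $v$ is that smallest in-neighbour (checking all $v'<v$), and to invert it you must \emph{find} that smallest in-neighbour --- neither is polynomial-time. Likewise, your Step~1 normalization ``at most one vertex acquires a second in-edge per column'' presupposes you can detect where merges occur, which again needs in-degree information you do not have. So the object you describe is not a syntactically legal $\sopl$ instance: its successor and predecessor circuits are not polynomial-size.

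The paper's fix is precisely to avoid ever asking ``who points to $w$?'': it takes $N^2$ copies of each $\sod$ node, indexed so that the copy $([i,j],x)$ \emph{encodes} the claimed predecessor $j$ in its own name, and then uses the $\pathiphp$ gadget inside each node's block to merge the $N^2$ incoming copies down to $N$ outgoing ones. That way the predecessor circuit reads off $j$ locally and only has to verify $S(j,y-1)=x$ with a single forward call --- no search over in-neighbours. Your re-routing idea is morally similar in spirit (use the pigeonhole instance to absorb excess in-edges), but as written it tries to do the absorption at the \emph{global} level rather than per-node, and that forces the non-local in-neighbour computation. (A smaller point: your closing parenthetical is off --- the paper does \emph{not} prove $\sod\curlywedge\bphp\leq\eopl$ by the same recipe; it first gets $\SOPL=\PLS\cap\PPADS$ and then separately shows $\sopl\curlywedge\bphp\leq\eopl$ via a mirroring-and-copying construction.)
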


\begin{proof}[Proof Sketch]
There are two obstacles to a direct reduction from \sod to \sopl: (i) we can only compute edges in the forward direction (i.e., we only have access to a successor circuit), and (ii) multiple edges can point to the same node.

To resolve the first issue, we modify the original $[N] \times [M]$ grid of the \sod instance by taking $N$ copies of each node. This ensures that there is a separate copy of each node $v$ for each potential predecessor on the previous column. As a result, edges from different predecessor nodes will point to different copies of $v$. This means that predecessor nodes can now also be computed efficiently. Namely, in order to compute the predecessors of the $i$th copy of node $v$, it suffices to check whether in the original \sod instance the $i$th node on the previous column points to $v$. If that is the case, then all copies of this $i$th node are predecessors in the modified instance. Otherwise, there are no predecessors. However, the second issue remains: since we have made $N$ copies of each node, there are also $N$ copies of each predecessor node, and thus $N$ edges pointing to the corresponding copy of $v$. This is not acceptable, since \sopl allows at most one incoming edge.

To overcome the second obstacle, we make use of the following high-level idea: use the $\iphp$ instance (which maps $K+1$ pigeons to $K$ holes) to ``hide'' the fact that multiple edges can point to a single node. Unfortunately, we cannot use the $\iphp$ instance to hide the fact that $N$ paths merge into a single node. But, if we take~$K N$ copies of each original node, instead of just $N$, then we have $KN$ paths and $K$ target nodes. By \cref{lem:inj-path-php}, the $\iphp$ instance can be turned into a $\pathiphp_f$ instance that hides the fact that $KN$ paths merge into $K$ paths. Thus, we replace each original node $v$ of the \sod instance by a gadget that has $KN$ nodes in the left-most column and $K$ nodes in the right-most column, and such that finding the sink of a path inside the gadget requires solving the $\iphp$ instance. Importantly, we only construct the paths inside this gadget when the original node $v$ has a successor in the \sod instance. This ensures that when $v$ is an isolated node, the corresponding gadget does not contain any edges. \Cref{figure:reduction_php_or_iter_sopl} illustrates the construction for $N=3$, $M=4$ and $K=2$.

\begin{figure}[bt]
\centering
\input{figures/reduction_php_or_iter_sopl_2.tex}
\caption{The reduction $\sod\curlywedge\pathiphp\leq\sopl$ in the proof of \Cref{lem:iphp_or_iter_reduces_to_sopl}. Given instances of $\sod$ and $\pathiphp$, we build an $\sopl$ instance whose solutions can be traced back to solutions of $\sod \curlywedge \pathiphp$. To overcome the issue of merging paths in $\sod$, the nodes of the $\sod$ instance are replaced with a copy of a $\pathiphp$ gadget (in blue).
Those gadgets are ultimately built out of the initial $\iphp$ instance (not shown) using \cref{lem:inj-path-php}.}
\label{figure:reduction_php_or_iter_sopl}
\end{figure}

Note that although we might have added many new sources to the graph (which are irrelevant for \sopl), it remains the case that from any sink of the new graph, we can extract either a solution to \sod or to $\iphp$.

In the final construction, edges can indeed be computed in both directions efficiently. Namely, given any node, we can determine in polynomial time if it has an incoming and/or outgoing edge, as well as the identity of the potential predecessor and successor nodes. Here, we crucially use the fact that edges can be computed efficiently in both directions in the $\iphp$ instance.
\end{proof}

\begin{proof}
Let $S$ be an instance of \sod on the grid $[N] \times [M]$. We are also given an instance of $\iphp$ with parameters $(A,B) = (K+1,K)$. Without loss of generality we can assume that $K = N$, because we can easily pad the \sod or $\iphp$ instance with additional rows without changing the set of solutions. By \cref{lem:inj-path-php}, we can reduce this $\iphp$ instance to a $\pathiphp_{t^2}$ instance on the grid $[N^2] \times [M']$ with parameters $(A,B) = (N^2,N)$. Without loss of generality, we can assume that $M' = M$, because we can pad the \sod or $\pathiphp_{t^2}$ instance with additional columns, if needed. This is not important for the reduction, but will be convenient.

We will take $N^2$ copies of each node in the original \sod instance, and make $M$ copies of each column. As a result, our \sopl instance will be defined on the $[N^3] \times [M^2]$ grid. It will be convenient to use some special notation to refer to points in this grid. For $\alpha \in [N^2]$ and $x \in [N]$, we use the notation $(\alpha, x)$ to denote the row $\alpha + (x-1) \cdot N^2 \in [N^3]$. This corresponds to indexing the $\alpha$th copy of row $x$ of the original instance. We also introduce some additional notation to index these $[N^2]$ copies: for $i,j \in [N]$, we let $[i,j] \coloneqq i + (j-1) \cdot N \in [N^2]$. Thus, $([i,j], x)$ denotes the $[i,j]$th copy of row $x$. The ``$[i,j]$'' notation essentially subdivides $[N^2]$ into $N$ blocks containing $N$ values each, which will be useful for routing incoming edges to the correct copy of a node. Using the analogous subdivision also on the columns, the notation $(\alpha,x;k,y) \in [N^2] \times [N] \times [M] \times [M]$ denotes the node $(\alpha + (x-1) \cdot N^2, k + (y-1) \cdot M) \in [N^3] \times [M^2]$. In particular, the notation $([i,j],x;k,y)$ is well-defined.

The circuits $\widehat{S}, \widehat{P}$ of the \sopl instance on $[N^3] \times [M^2]$ are defined as follows:
\begin{align*}
    \widehat{S}([i,j],x;k,y)
    &~\coloneqq~
    \begin{cases}
    ([i,x],S(x,y)) & \text{if } k = M \text{ and } j = 1,\\
    (S'([i,j],k),x) & \text{if } k < M \text{ and } S(x,y) \neq \nul,\\
    \nul & \text{otherwise}
    \end{cases} \\[1em]
    \widehat{P}([i,j],x;k,y)
    &~\coloneqq~
    \begin{cases}
    ([i,1],j) & \text{if } k = 1 \text{ and } y > 1 \text{ and } S(j,y-1) = x,\\
    (P'([i,j],k),x) & \text{if } k > 1 \text{ and } S(x,y) \neq \nul,\\
    \nul & \text{otherwise}
    \end{cases}
\end{align*}
where $(\alpha,z) \in [N^2] \times [N]$ is interpreted as an element in $[N^3]$ as above, and where we use the convention $(\ast,\nul) = (\nul,\ast) = \nul$. Using the fact that $S'$ and $P'$ are consistent, it can be checked that $\widehat{S}$ and $\widehat{P}$ are also consistent.

In order to argue about the correctness of the reduction, consider any sink $([i,j],x;k,y)$ of the \sopl instance. If $2 \leq k \leq M-1$, then it must be that $([i,j],k)$ is a sink of the $\pathiphp_{t^2}$ instance $(S',P')$. If $k = M$ and $j = 1$, then $([i,j],x;k,y)$ cannot be a sink, since $\widehat{P}([i,j],x;k,y) \neq \nul$ implies that $S(x,y) \neq \nul$, and thus $\widehat{S}([i,j],x;k,y) \neq \nul$. If $k = M$ and $j > 1$, then $([i,j],k)$ is an invalid sink on the last column of the $\pathiphp_{t^2}$ instance, and so in particular a solution. If $k = 1$ and $S(x,y) \neq \nul$, then $([i,j],k)$ is a missing source on the first column of the $\pathiphp_{t^2}$ instance, and so again a solution. Finally, if $k = 1$ and $S(x,y) = \nul$, then it must be that $S(j,y-1) = x$ and thus $(x,y)$ is a sink of the original \sod instance, and this is witnessed by the node $(j,y-1)$.
\end{proof}

\section{\texorpdfstring{$\EOPL = \PLS\cap\PPAD$}{EOPL = PLS \cap PPAD}}

In this section, we prove \cref{thm:EOPL-collapse}, namely $\EOPL = \PLS\cap\PPAD$. The equality $\SOPL = \PLS\cap\PPADS$ (\cref{thm:SOPL-collapse}) proved in the previous section, together with the fact that $\PPAD \subseteq \PPADS$, immediately imply that
\[\SOPL\cap\PPAD \,=\, \PLS\cap\PPAD.\]
As a result, in order to prove \cref{thm:EOPL-collapse}, it suffices to give a reduction from an $\SOPL\cap\PPAD$-complete problem to an $\EOPL$-complete problem:

\begin{lemma}
$\sopl \curlywedge \bphp\leq\eopl$.
\label{lem:bphp_or_sopl_reduces_to_eopl}
\end{lemma}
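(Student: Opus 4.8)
The plan is to mirror the reduction $\sod\curlywedge\iphp\leq\sopl$ of \cref{lem:iphp_or_iter_reduces_to_sopl}, but "one bit up": $\sopl$ and $\eopl$ differ only in the parameter $b$ ($b=0$ versus $b=1$), so passing to $\eopl$ switches on exactly the two extra solution types of the $\grid$ problem governed by $r=b=1$ --- a node with a successor but no predecessor away from $(1,1)$ (a "pigeon genesis", type~4a) and a valid end of the last column that no path reaches (an "empty hole", type~4b). In \cref{lem:iphp_or_iter_reduces_to_sopl} such spurious sources were created freely by the gadgets and declared harmless because they are irrelevant for $\sopl$; here they are no longer harmless, and each one must be charged to a genuine solution. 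The tool for this is the \emph{bijective} path-pigeonhole gadget: by the analogue of \cref{lem:inj-path-php} noted right after its proof, the $\bphp$ instance reduces to a $\pathbphp_f$ instance, and --- crucially --- any solution of that $\pathbphp_f$ instance, of whichever of the types~1--4, maps back to a solution of $\bphp$.

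Concretely, I would first fix a suitable polynomial $f$ with $f(t)>t$ and use $\bphp\leq\pathbphp_f$ to replace the given $\bphp$ instance by a $\pathbphp_f$ instance $(S',P')$, padding its grid so that it has a convenient number of columns (exactly as in the proof of \cref{lem:iphp_or_iter_reduces_to_sopl}). Then, given the $\sopl$ instance $S$ on $[N]\times[M]$ with distinguished source $(1,1)$, I would build the $\eopl$ instance on a polynomially blown-up grid by substituting each node $v=(x,y)$ of the $\sopl$ instance by a private copy $\Gamma_v$ of the $\pathbphp_f$ gadget: the unique incoming $\sopl$-edge $u\to v$ (if any) is wired into the input column of $\Gamma_v$, the unique outgoing edge $v\to w$ leaves its output column into the input column of $\Gamma_w$, and --- as in \cref{lem:iphp_or_iter_reduces_to_sopl} --- the internal paths of $\Gamma_v$ are created only when $v$ actually has a successor, so an isolated $\sopl$-node contributes no edges. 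The gadget $\Gamma_{(1,1)}$ is made special: it carries one internal source, positioned to be the node $(1,1)$ of the big grid, which becomes the distinguished source of the $\eopl$ instance. Finally, as mandated by \cref{definition:grid_problem}, the successor and predecessor circuits of the composite instance are replaced by their consistent versions $\overline{S},\overline{P}$; since $S'$ and $P'$ are consistent and the inter-gadget wiring is a fixed bijection between output and input slots, this is routine.

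For correctness I would take an arbitrary solution of the $\eopl$ instance and locate it. If it is a sink, a new source, or an empty hole occurring \emph{inside} some $\Gamma_v$ --- in particular any "phantom" path fragment produced by a gadget that never receives a path from upstream (e.g.\ a gadget sitting on a $\sopl$-path unreachable from $(1,1)$) --- it is a solution of the $\pathbphp_f$ instance, hence of $\bphp$. If it is a sink at an \emph{output} slot of $\Gamma_v$ whose edge would enter $\Gamma_w$ with $w=S(v)$ but $S(w)=\nul$, then $(v,\cdot)$ is a type-3 $\sopl$-solution; a path that reaches the last column of the whole grid is a type-2 $\sopl$-solution; and $S(1,1)=\nul$ is a type-1 $\sopl$-solution. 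The last case to rule out --- a new source surviving at an \emph{input} slot of some $\Gamma_v$ with $v\neq(1,1)$ --- is exactly what the bijective gadget is designed against: a missing upstream input must re-appear as a $\pathbphp_f$-solution (an empty hole in the gadget $\Gamma_u$ of the $\sopl$-predecessor $u$ of $v$, or further back), which is again a $\bphp$-solution. All these maps are polynomial-time, giving $\sopl\curlywedge\bphp\leq\eopl$.

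The hardest part will be precisely this last point: designing the $\pathbphp_f$ gadget and its wiring so that \emph{every} new source or empty hole in the big instance --- including those coming from portions of the $\sopl$ instance completely disconnected from the distinguished source --- is certified by a $\bphp$-solution, while making sure that no \emph{single} gadget ever forces a $\bphp$-solution on its own (that would wrongly yield $\sopl\curlywedge\bphp\leq\bphp$, hence the strongly-believed-false inclusion $\PPAD\subseteq\PLS$). Reconciling this with the demand that the potential, i.e.\ the column index, increase consistently through each gadget and each inter-gadget edge, while keeping the blow-up polynomial, is the bookkeeping that fixes the precise layout of the gadgets and the value of $f$.
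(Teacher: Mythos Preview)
Your proposal has a genuine gap: it does not handle non-distinguished sources of the $\sopl$ instance, which is precisely the obstacle separating $\sopl$ from $\eopl$. Suppose $v\neq(1,1)$ is a source in the $\sopl$ instance, i.e.\ $v$ has a successor but no predecessor. By your rule, $\Gamma_v$ is activated (it has internal paths, since $v$ has a successor), yet no $\Gamma_u$ feeds its input column (since $v$ has no $\sopl$-predecessor $u$). The input slots of $\Gamma_v$ are therefore sources of the big $\eopl$ instance---type-4a solutions---and your back-tracing argument ``look at $\Gamma_u$ for the predecessor $u$'' cannot start, because no such $u$ exists. There is no way to map these sources to a $\bphp$ solution, and they are certainly not $\sopl$ solutions (sources are not solutions in $\sopl$). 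More generally, the gadget paradigm of \cref{lem:iphp_or_iter_reduces_to_sopl} was engineered to hide \emph{merging} in $\sod$; $\sopl$ has no merging, so placing a $\pathbphp_f$ gadget (with $f(T)$ inputs and $T$ outputs) at each node creates $f(T)-T$ unmatched input slots at \emph{every} gadget, not just at $\sopl$-sources---each one a spurious $\eopl$ source. Your closing paragraph correctly flags this as ``the hardest part'', but the construction as described does not resolve it.

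The paper's approach is entirely different and does not use per-node gadgets at all. The key idea is a \emph{mirror trick}: place a reversed copy of the $\sopl$ instance to the left of the original, and for every non-distinguished source $v$ add an edge from its mirror $\overline v$ (a sink in the reversed copy) to $v$. This kills all non-distinguished sources in one stroke; the only casualty is that the distinguished source $v_0$ is also absorbed. To fix this, take $K$ copies of the whole picture, introduce a fresh distinguished source $u$, and route the $K{+}1$ nodes $\{u,\overline{v}_0^{(1)},\dots,\overline{v}_0^{(K)}\}$ into the $K$ nodes $\{v_0^{(1)},\dots,v_0^{(K)}\}$ according to the $\bphp$ instance. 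Now every $\eopl$ solution is either a sink of some $\sopl$ copy or a defect of the $\bphp$ matching. The $\bphp$ instance is used once, globally, not once per node.
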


\begin{proof}[Proof Sketch]
A very natural attempt at a reduction from \sopl to \eopl is to try to remove all undistinguished sources, i.e., all sources except the trivial one. Then, clearly, any \eopl-solution would have to be a sink, and thus also a solution to \sopl.

There is a simple trick that \emph{almost} achieves this. First, make a \emph{reversed} copy of the \sopl instance, i.e., reverse the direction of all edges, and the ordering of the potential. Note that sources of the original instance have now become sinks in the reversed copy, and vice versa. Then, for each source node $v$ of the original graph, add an edge pointing from its copy $\overline{v}$ (which is a sink) to $v$.

\begin{figure}[t]
\centering
\newcommand{\gadgetspace}{3}
\newcommand{\labelshift}{-1}

\begin{tikzpicture}[y=-1cm, scale=1]
\coordinate (p11) at (0, 0);				 \coordinate (p12) at (\spacex, 0);				 \coordinate (p13) at (2*\spacex, 0);				\coordinate (p14) at (2*\spacex + \gadgetspace, 0);				\coordinate (p15) at (\gadgetspace + 3*\spacex, 0);				 \coordinate (p16) at (\gadgetspace + 4*\spacex, 0);
\coordinate (p21) at (0, \spacey); 	 \coordinate (p22) at (\spacex, \spacey);	 \coordinate (p23) at (2*\spacex, \spacey); 	\coordinate (p24) at (2*\spacex + \gadgetspace, \spacey);		\coordinate (p25) at (\gadgetspace + 3*\spacex, \spacey);	 \coordinate (p26) at (\gadgetspace + 4*\spacex, \spacey);
\coordinate (p31) at (0, 2*\spacey); \coordinate (p32) at (\spacex, 2*\spacey); \coordinate (p33) at (2*\spacex, 2*\spacey); \coordinate (p34) at (2*\spacex + \gadgetspace, 2*\spacey); \coordinate (p35) at (\gadgetspace + 3*\spacex, 2*\spacey); \coordinate (p36) at (\gadgetspace + 4*\spacex, 2*\spacey);

\coordinate (c11) at ([shift=({-\polygonshift,-\polygonshift})]p11);
\coordinate (c13) at ([shift=({\polygonshift,-\polygonshift})]p13);
\coordinate (c33) at ([shift=({\polygonshift,\polygonshift})]p33);
\coordinate (c31) at ([shift=({-\polygonshift, \polygonshift})]p31);

\coordinate (c14) at ([shift=({-\polygonshift,-\polygonshift})]p14);
\coordinate (c16) at ([shift=({\polygonshift,-\polygonshift})]p16);
\coordinate (c36) at ([shift=({\polygonshift,\polygonshift})]p36);
\coordinate (c34) at ([shift=({-\polygonshift, \polygonshift})]p34);

\coordinate (s1) at (2*\spacex + .5*\gadgetspace, -\spacey);
\coordinate (s2) at (2*\spacex + .5*\gadgetspace, 0);

\draw[gadget_SOPL] (c11) -- (c13) -- (c33) -- (c31) -- cycle;
\draw[gadget_SOPL] (c14) -- (c16) -- (c36) -- (c34) -- cycle;

\small
\node[node_solution] (P11) at (p11) {};
\node[node_regular] (P12) at (p12) {};
\node[node_solution,naive] (P13) at (p13) {\color{white}$\overline{v}_0$};
\node[node_regular,naive] (P14) at (p14) {$v_0$};
\node[node_regular] (P15) at (p15) {};
\node[node_solution] (P16) at (p16) {};
\node[node_solution] (P21) at (p21) {};
\node[node_regular] (P22) at (p22) {};
\node[node_regular] (P23) at (p23) {};
\node[node_regular] (P24) at (p24) {};
\node[node_regular] (P25) at (p25) {};
\node[node_solution] (P26) at (p26) {};
\node[node_regular] (P31) at (p31) {};
\node[node_regular] (P32) at (p32) {};
\node[node_regular] (P33) at (p33) {};
\node[node_regular] (P34) at (p34) {};
\node[node_regular] (P35) at (p35) {};
\node[node_regular] (P36) at (p36) {};

\draw[edge_regular] (P11) -- (P32);
\draw[edge_regular] (P32) -- (P33);
\draw[edge_regular] (P21) -- (P12);
\draw[edge_regular] (P12) -- (P13);

\draw[edge_eopl] (P33) edge[bend left=10] (P34);

\draw[edge_regular] (P14) -- (P15);
\draw[edge_regular] (P15) -- (P26);
\draw[edge_regular] (P34) -- (P35);
\draw[edge_regular] (P35) -- (P16);

\small
\node[node_a, naive] (S1) at (s1) {$u$};
\draw[edge_eopl, rounded corners] (S1) -- (s2) -- (P14);

\node[] at ([shift=({0,\labelshift})]p12) {\textcolor{color_gadget_SOPL_label}{reversed $\sopl$}};
\node[] at ([shift=({0,\labelshift})]p15) {\textcolor{color_gadget_SOPL_label}{original $\sopl$}};
\end{tikzpicture}
\caption{A naive attempt at a reduction $\sopl\leq\eopl$. Although most solutions arise from the sinks of the original $\sopl$ instance, a spurious solution is introduced at $\bar{v}_0$, which does not correspond to any sink of the original instance.
}
\label{figure:reduction_sopl_eopl}
\end{figure}
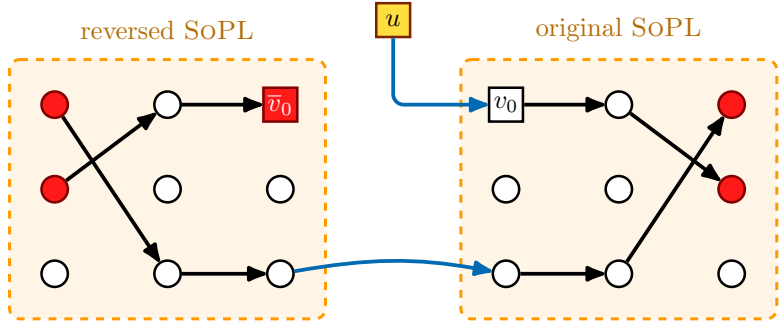

The only problem with this reduction is that we have eliminated \emph{all} sources of the original graph, including the distinguished one. In particular, the distinguished source $v_0$ of the original instance is no longer a source, since there is an edge from its copy $\overline{v}_0$ to $v_0$. As a result, the reduction fails, because the instance of \eopl we have constructed does not have a distinguished source. Furthermore, we cannot hope to turn one of the new sources into a distinguished source, since any such source yields a solution to the original instance (where it is a sink).

In order to address this issue, we add a new node $u$ and select it as our new distinguished source. Clearly, $u$ is a solution of the instance, since it is a distinguished source that is not actually a source, but just an isolated node. Now, imagine that we remove the edge $(\overline{v}_0,v_0)$ and instead introduce an edge $(u,v_0)$; see \cref{figure:reduction_sopl_eopl}. Then, $u$ is no longer a solution, but $\overline{v}_0$ becomes a sink, and thus a solution, instead. In other words, the reduction can pick whether it wants $u$ or $\overline{v}_0$ to be a solution by changing this edge. Of course, in both cases, the resulting instance is very easy to solve, but this minor observation already provides the idea for the next step.

Take $k$ copies of the instance we have constructed (before adding $u$). There are now $k$ copies $v_0^{(1)}, \dots, v_0^{(k)}$ of the original distinguished source, and $k$ copies of the reverse copy $\overline{v}_0^{(1)}, \dots, \overline{v}_0^{(k)}$. Remove the edges $(\overline{v}_0^{(i)}, v_0^{(i)})$ for $i = 1, \dots, k$. If we now introduce the new distinguished source $u$, we have $k+1$ nodes that ``need'' an outgoing edge in order to not be solutions (namely, $u, \overline{v}_0^{(1)}, \dots, \overline{v}_0^{(k)}$) and $k$ nodes that ``need'' an incoming edge (namely, $v_0^{(1)}, \dots, v_0^{(k)}$). Clearly, no matter how we introduce edges here, one of $u, \overline{v}_0^{(1)}, \dots, \overline{v}_0^{(k)}$ will not have an outgoing edge and will be a solution. However, we can use a $\bphp$ instance to make it hard to find such a solution. Let $K$ denote the parameter of the $\bphp$ instance, i.e., $K+1$ points are mapped to $K$ points. Then, we let $k \coloneqq K$ and add edges between $u, \overline{v}_0^{(1)}, \dots, \overline{v}_0^{(k)}$ and $v_0^{(1)}, \dots, v_0^{(k)}$ according to the $\bphp$ instance. An example of the construction is depicted in \Cref{figure:reduction_php_or_sopl_eopl}.

\begin{figure}[th]
\centering
\input{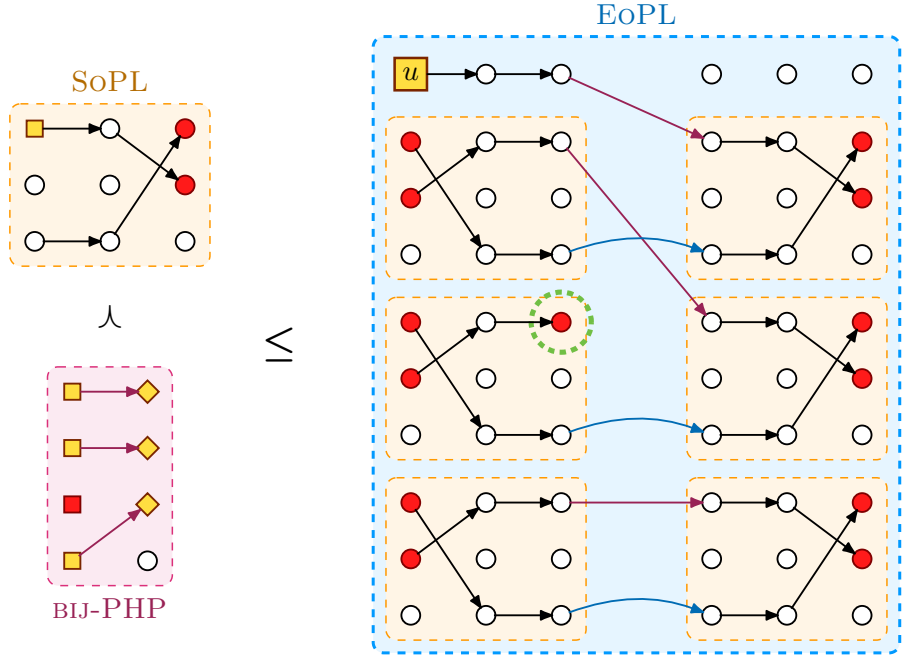}
\caption{The reduction $\sopl\curlywedge\bphp\leq\eopl$ in \cref{lem:bphp_or_sopl_reduces_to_eopl}. The $\bphp$ instance (in pink) connects the newly introduced source $u$ together with the distinguished sources and sinks of the copied $\sopl$ instances. Non-distinguished sources and sinks are connected with blue edges. The node circled in green corresponds to a solution of the $\bphp$ instance.}
\label{figure:reduction_php_or_sopl_eopl}
\end{figure}

Now, it is easy to check that any undistinguished source or any sink of the resulting graph must yield a solution to the $\bphp$ instance or a solution of the \sopl instance. In particular, if $u$ is not a source, then this yields a solution to $\bphp$.
\end{proof}

\begin{proof}
Let $(S,P)$ be an instance of \sopl on the grid $[N] \times [M]$. Without loss of generality, we can assume that all sources occur on the first column, i.e., for any source $(x,y) \in [N] \times [M]$ it holds that $y=1$. Indeed, by appropriately increasing $N$, for each source $(x,y)$ we can add a path that starts on the first column and ends at $(x,y)$, thus effectively ``transferring'' the source to the first column. Let $(S',P')$ be an instance of $\bphp$ on the grid $[K+1] \times [2]$ that maps $K+1$ pigeons to $K$ holes.

We take $K$ copies of the \sopl instance and $K$ copies of the reversed \sopl instance, all together in a single grid. This grid will be of the form $[KN] \times [2M]$. For clarity, we will use the notation $(i,x;y) \in [K] \times [N] \times [2M]$ to denote the element $(x + (i-1) \cdot N,y) \in [KN] \times [2M]$. The $i$th copy of the instance will be embedded in $\{i\} \times [N] \times ([2M] \setminus [M])$, while the $i$th reversed copy will be in $\{i\} \times [N] \times [M]$. Formally, we define new successor and predecessor circuits $\widehat{S}, \widehat{P}$ on $[KN] \times [2M]$ as follows:
\begin{align*}
    \widehat{S}(i,x;y)
    &~\coloneqq~
    \begin{cases}
    (i,P(x,M-y+1)) & \text{if } y \leq M,\\
    (i,S(x,y-M)) & \text{if } y \geq M+1
    \end{cases}\\[1em]
    \widehat{P}(i,x;y)
    &~\coloneqq~
    \begin{cases}
    (i,S(x,M-y+1)) & \text{if } y \leq M,\\
    (i,P(x,y-M)) & \text{if } y \geq M+1
    \end{cases}
\end{align*}
where $(i,z) \in [K] \times [N]$ represents the element $z + (i-1) \cdot N \in [KN]$, and where we use the convention $(i,\nul) = \nul$.

Since $S$ and $P$ are consistent, $\widehat{S}$ and $\widehat{P}$ are also consistent. Note that there are currently no edges between column $M$ and column $M+1$. In the second step of the reduction we add edges between these two columns as follows. For every $i \in [K]$ and $x \in [N] \setminus \{1\}$, if $(i,x;M+1)$ is a source, then we add an edge from $(i,x;M)$ to $(i,x;M+1)$. Note that in that case $(i,x;M)$ was a sink. The case where $x=1$ is handled separately, because it corresponds to nodes that are copies of the distinguished source of the original \sopl instance. For any $i \in [K]$ and for $x=1$, if $S'(i,1) = j \neq \nul$, we add an edge from $(i,1;M)$ to $(j,1;M+1)$. Note that here we also use $P'$ (which is assumed to be consistent with $S'$) to implement this edge in $(\widehat{S},\widehat{P})$.

Finally, we introduce a new special node $u$ on column $M$ which will act as our new distinguished source. If $S'(K+1,1) = j \neq \nul$, then we add an edge from $u$ to $(j,1;M+1)$. By extending the grid to be $[KN+1] \times [2M]$, by renaming nodes and by ``transferring'' the source $u$ to the first column as before, we can ensure that the distinguished source is $(1,1)$.

It is easy to check that the new circuits $\widehat{S}, \widehat{P}$ can be constructed in polynomial time. For the correctness of the reduction, note that any source or sink that occurs on columns $[2M] \setminus \{M,M+1\}$ must correspond to a sink of the original \sopl instance. On the other hand, any source or sink that occurs on column $M$ or $M+1$ must correspond to a solution of the $\bphp$ instance (namely, a pigeon without a hole, or a hole without a pigeon). This completes the reduction.
\end{proof}

\section{Discussion}\label{sec:discussion}

As mentioned in the introduction, it remains open whether $\UEOPL \overset{?}{=} \EOPL$. Separating the two classes in the black-box model would be an important first step towards pinning down the complexity of the various natural problems contained in \UEOPL, since it would provide strong evidence that these problems are unlikely to be complete for $\PLS\cap\PPAD$.

The techniques developed in this paper do not seem to yield any other major class collapse. Indeed, our reductions are all black-box, and the main classes are known to be distinct in that model~\cite{Beame1998,Morioka2001,Buresh2004,Goos2022}.

In the remainder of this section we briefly present some observations about the path pigeonhole problems, as well as a further consequence of our reduction techniques: a version of $\sod$ where paths are not allowed to merge turns out to be $\PLS\cap\PPP$-complete.

\paragraph{Path-Pigeonhole problems.}
\cref{lem:inj-path-php} in particular establishes that $\pathiphp_f$ is \PPADS-hard. Membership in \PPADS can be shown by reducing to $\iphp$ using a construction similar to the reduction from $\eol$ to $\bphp$ in the proof of \cref{lem:PPAD-as-PHP}.

The statement of \cref{lem:inj-path-php} also holds for $\bphp\leq \pathbphp_f$, and the proof is essentially the same. This shows that $\pathbphp_f$ is \PPAD-hard. However, it is unclear whether $\pathbphp_f$ lies in \PPAD. Indeed, using the same idea as for $\pathiphp_f \leq \iphp$ yields an instance with $A \gg B$, and we cannot increase $B$ artificially here (whereas this is possible in $\iphp$). Another way to state this is to say that we can reduce $\pathbphp_f$ to an instance of $\eol$ that has many distinguished source nodes, instead of just one. It is known that $\eol$ with a polynomial number of distinguished sources remains \PPAD-complete~\cite{Goldberg2021}, but in general we will obtain an exponential number of such sources here.

\paragraph{Extending the $\grid$ problem to capture \PPP.}
The canonical \PPP-complete problem is $\pigeoncircuit$~\cite{Papadimitriou1994}: given a circuit mapping $N$ pigeons to $N-1$ holes, find a \emph{collision}, i.e., two pigeons that are mapped to the same hole. Importantly, unlike in $\iphp$ or $\bphp$, we are not given a circuit to compute the mapping in the other direction, i.e., from holes to pigeons. In order to capture this problem, we extend the definition of $\grid$ by introducing an additional parameter bit $c \in \{0,1\}$, which stands for \emph{collision}. We also introduce a new solution type:
\begin{enumerate}
    \item[5.] If $r=0$ and $c=1$: $x_1,x_2 \in [N]$ and $y \in [M-1]$ such that \\
    $x_1 \neq x_2$ and $S(x_1,y) = S(x_2,y) \neq \nul$, \hfill \emph{(pigeon collision/merging)}
\end{enumerate}
Furthermore, the syntactic condition ``If $r=0$, then $b=0$ and $B=0$'' is replaced by the condition:
\begin{itemize}
    \item If $r=0$, then $b=0$. If $r=0$ and $c=0$, then $B=0$.
\end{itemize}
The \PPP-complete problem $\pigeoncircuit$ is then obtained by setting $r=0, c=1, M=2, N=A=B+1$. In fact, $\grid$ remains in \PPP even if we just set $r=0, c=1$ and leave the other parameters unfixed. This can be shown by using a construction similar to the reduction from $\eol$ to $\bphp$ in the proof of \cref{lem:PPAD-as-PHP}.

\paragraph{$\sod$ without merging.}
What is the complexity of $\sod$ if paths are not allowed to merge? In other words, what is the complexity of the $\grid$ problem with parameters $r=0, c=1, A=1, B=0$? Clearly, this restricted version still lies in \PLS, and by the previous paragraph it also lies in \PPP. Using the ideas developed in this paper, it can be shown that the problem is in fact $\PLS\cap\PPP$-complete. To see this, note that using the simple construction in the proof of \cref{lem:inj-path-php} we can reduce $\pigeoncircuit$ to a path-version of the problem where $f(T)$ pigeons are mapped to $T$ holes. Then, the construction in the proof of \cref{lem:iphp_or_iter_reduces_to_sopl} can be used to reduce $\sod \curlywedge \pigeoncircuit$ to $\sod$ without merging.

\bigskip
\subsection*{Acknowledgements}

We thank Aviad Rubinstein for his many questions during e-mail correspondence, and the anonymous reviewers for their suggestions that helped improve the presentation of the paper.

\bigskip

\small

\DeclareUrlCommand{\Doi}{\urlstyle{sf}}
\renewcommand{\path}[1]{\small\Doi{#1}}
\renewcommand{\url}[1]{\href{#1}{\small\Doi{#1}}}
\bibliographystyle{alphaurl}
\bibliography{eopl-refs}

\end{document}